\documentclass[11pt]{article}

\usepackage[a4paper,margin=1in]{geometry}
\usepackage{authblk}
\usepackage{hyperref}
\usepackage{microtype}
\usepackage{siunitx}
\usepackage[T1]{fontenc}
\usepackage[utf8]{inputenc}
\usepackage{lmodern}
\usepackage{xspace}
\usepackage{float}
\usepackage[draft]{changes}

\usepackage[english]{babel}
\usepackage{algorithm}
\usepackage{algpseudocode}
\usepackage{booktabs}
\usepackage{multirow}
\usepackage{bm}
\usepackage{graphicx}
\usepackage{subcaption}

\usepackage{pgfplots}
\pgfplotsset{compat=1.18}

\usepackage{amsmath,amssymb,amsthm,amsfonts,mathtools}

\newcommand{\persistence}{{{\cal P}^{\star}_{\cal C}}}
\newcommand{\milanoEx}{\texttt{Milano}\xspace}

\newcommand{\ScaledNAP}{Scaled-NAP\xspace}
\newcommand{\sNAP}[1]{\ensuremath{\mathrm{sNAP}_{#1}}}

\DeclareRobustCommand{\SNAPEx}{%
	\ifmmode \sNAP{\alpha}\else \ScaledNAP\fi}

\newcommand{\SNAPcoll}{Stanford Network Analysis Project\xspace}

\definecolor{sbgreen}{RGB}{0, 153, 076}

\definechangesauthor[name={AV}, color=orange]{av}

\theoremstyle{definition}

\newtheorem{proposition}{Proposition}
\newtheorem{corollary}[proposition]{Corollary}
\newtheorem{remark}{Remark}

\title{Scaled Null-Adjusted Persistence:\\ A Multiscale Bridge between Modularity and Persistence}

\author[1]{Alessandro Avellone}
\author[2]{Paolo Bartesaghi}
\author[3]{Stefano Benati}
\author[1]{Rosanna Grassi}

\affil[1]{University of Milano - Bicocca, Via Bicocca degli Arcimboldi 8, 20126 Milano, Italy}

\affil[2]{University of Milano, Via Conservatorio 7, 20122 Milano, Italy}

\affil[3]{University of Trento, Via Verdi 26, 38122 Trento, Italy}

\begin{document}
	\maketitle
	
	\begin{abstract}
		Community detection methods must balance two competing objectives: identifying small, cohesive groups while avoiding excessive fragmentation. Modularity, the most widely adopted optimization criterion, typically merges small communities in large networks due to its resolution limit. In contrast, a persistence-based criterion promotes more granular partitions. We introduce Scaled Null-Adjusted Persistence (Scaled-NAP), a parametric family of quality functions that incorporates both these criteria. The definition exploits the exact identity between a cluster’s modularity contribution and its Null-Adjusted Persistence (NAP) multiplied by its relative volume. Raising this volume factor to a parameter $\alpha\in[0,1]$ yields NAP at $\alpha=0$ and modularity at $\alpha=1$, while intermediate values control the scale of the detected partition. We derive conditions under which merging two communities improves the objective function and characterize the emergence of scale dependence, including resolution-limit behaviour on Caveman graphs. We develop the Milano algorithm, a multilevel Louvain-style heuristic for optimizing Scaled-NAP on large networks. Experiments on weighted and unweighted Lancichinetti-Fortunato-Radicchi benchmarks show that Scaled-NAP achieves the highest or tied-highest recovery wherever the ground truth structure is detectable, with its advantage increasing under community-size heterogeneity. Tests on three real networks with up to 1.1 million nodes confirm its capability to identify fine-grained ground-truth communities. The Milano algorithm also turned out to be the fastest method evaluated on large networks. These results show that Scaled-NAP provides an effective and scalable bridge between modularity-based and persistence-based community detection methodologies.
	\end{abstract}
	
	\noindent\textbf{Keywords:} Networks; Community detection; Modularity; Null-Adjusted Persistence; Scaled-NAP.
	
	\section{Introduction}
	\label{introduction}
	
	One of the main issues in network theory is to discover 
	the node
	community structure. It has important applications in many fields of operational research, such as pandemic contention \cite{Abdin2023, Baghersad2023}, warehouse management \cite{Hu2026}, terrorism investigation \cite{Lindelauf2013, Algaba2024}, finance \cite{Macmahon2015,Amini2024}, portfolio optimization \cite{zema2025}, asset allocation \cite{ditollo2026}, world trade \cite{grassi2021,bartesaghi2020}, voting alliances \cite{MacOn2012, Benati2026}, opinion surveys \cite{benati2024}. In a taxonomy proposed in \cite{Fortunato2016}, community detection methods can be constructive, such as variants of hierarchical clustering, they can rely on inferential statistics (assuming a data generative model), or simulate dynamic processes to reveal hidden structural information. Finally, one of the most important methodologies is to solve an optimization model, to find the communities as the outcome of an optimal partition problem. This method has a clear advantage over the others in its objectivity: given two partitions (corresponding to two community structures), one can establish which is the best by comparing their objective function. A crucial feature of optimization methods is identifying the most appropriate objective function to define a community in a given application.
	
	The modularity function is a widely used statistic (and objective function) to find the optimal partitions \cite{2006Newman}. The modularity compares the actual number of edges connecting node pairs within the same cluster with its expected value under a null hypothesis, called the configuration model. It is a widespread model, popularized through the Louvain algorithm available in the Python and R platforms, but is affected by the so-called resolution limit: small communities of a large network cannot be identified, as they are merged into larger groups owing to the mathematical properties of the modularity function \cite{2007Fortunato, Avellone2025}. To overcome this issue, different solutions have been proposed. Some literature suggested correcting the modularity through the arc density, see \cite{Costa2015}, or through the difference with a reference calculated through the $z$-score,  see \cite{Miyauchi2016}. Another possibility is to use different network statistics, such as the Map equation, see \cite{rosvall2008maps}, or the persistence function, see \cite{Piccardi2011}.

	In this contribution, we will focus our attention on the persistence function, see \cite{Piccardi2011, 2024Avellone, Avellone2025}, as a viable alternative to modularity to detect network communities. Indeed, as discussed in \cite{Avellone2025}, the persistence is defined by the probability (in a long term dynamic process) that a random walker moving randomly through the network will remain in the nodes of the same cluster. In \cite{Avellone2025}, a variation of the persistence, called Null-Adjusted Persistence (NAP), has been proposed. Taking inspiration from modularity, the persistence function is corrected by a term representing its expected value under the configuration model. The configuration model, that is, the same node degree distribution but arcs randomly located, is assumed as the null hypothesis: a random graph in which no community structure is imposed. So, the difference between the actual and the theoretical persistence is a measure that can reveal the community structure of the network. Indeed, the optimal partition, i.e. the community detection, is calculated through the maximum of the NAP. There are some important differences between NAP and modularity. From a theoretical point of view, it has been established that, conversely to modularity, the NAP does not suffer from the resolution limit. This property is important: affected by the network size, the modularity tends to produce large communities, to the point that, in empirical applications, the small clusters are not recognized, but merged into the largest communities. Conversely, NAP is insensitive to both network and community size, allowing large and small communities to remain clearly distinguishable regardless of the overall network scale. Empirical tests confirmed the theoretical properties in the practical applications. Modularity, as affected by the resolution limit, tends to merge the true small communities into the largest ones, while NAP, not affected, tends to recognize the quality of the small communities. Broadly speaking, modularity yields a smaller number of communities than NAP.

	However, having to choose between the two measures to detect communities, a researcher should  consider \textit{both} their quantitative properties \textit{and} the qualitative properties that are assumed by the two measures. Actually, NAP and modularity are two quantitative measures that rely on two different qualitative assumptions about what is a community. Modularity emphasizes a static interpretation, that is, nodes within a community are characterized by strong bonds (represented by arcs) between them. NAP emphasizes a dynamic interpretation, as communities are formed by nodes in which the information flow tends to be more persistent. One may wonder what could be done when the choice between the two measures is not straightforward, or when modularity is the appropriate measure but the resolution limit has to be avoided. In this work, a new parametric family of network statistics, called Scaled Null-Adjusted Persistence (\SNAPEx{}) is proposed as a bridge between NAP and modularity. Indeed, a parametric factor controls the resolution limit from the modularity to the persistence. The parametric factor corresponds to inflating the cluster volume, a measure defining the strength of the cluster. So, this parametric family depends on a real parameter $\alpha \in [0,1]$ that provides NAP and modularity as special cases: the NAP corresponds to $\alpha=0$, while the  modularity corresponds to $\alpha=1$. 
	
	In this contribution, we analyse the application of \SNAPEx{} to community detection from a theoretical and practical perspective. From a theoretical point of view, we will analyse the role of $\alpha$ to avoid the resolution limit. Having the two extreme measures modularity and NAP---the former affected by the resolution limit, the latter not---one may wonder what are the values of $\alpha$ that cause the resolution limit to disappear. We will see that small perturbation of $\alpha$ around the modularity are enough to obtain a resolution-free \SNAPEx{} measure in specific network structures while keeping
	very close to the actual modularity.  From a practical point of view, we elaborated on previous NAP optimization, see \cite{2024Avellone, Avellone2025}, to implement a heuristic procedure to find the optimal \SNAPEx{} partitions. The specific purpose is to design a scalable procedure capable of calculating near-optimal partitions without exceeding memory and time limits, even for graphs with billions of arcs. Therefore, taking advantage of past computational experiences, we developed a variant of the Louvain algorithm for the optimal \SNAPEx{} partition, an algorithm that is called \milanoEx, following the tradition of naming community subroutines with the name of a city, see \cite{blondel2008fast,traag2019louvain, bayan}. Next, we tested the reliability of the partitions calculated through \SNAPEx{} optimization on many ground-truth simulated networks and we found that \SNAPEx{} optimization can recover the hidden communities with remarkable precision, often improving the partitions calculated through alternative methods. We repeated the experiments on a set of real-world large networks with ground-truth, again showing the ability (and sometimes the weakness) of \SNAPEx{} optimization compared to other methods. 
	
	The paper is organized as follows. Section \ref{sec:nap} reviews Null-Adjusted Persistence and its relationship with modularity. Section \ref{sec:snap} introduces the \SNAPEx{} family and shows how NAP and modularity emerge as its two endpoint cases. Section \ref{sec:analytical} derives the conditions under which merging clusters improves the \SNAPEx{} objective and investigates scale dependence and resolution-limit behaviour, with specific reference to caveman graphs. Section \ref{sec:milano} addresses the resulting optimization problem and presents the multilevel \milanoEx algorithm. Section \ref{sec:benchmark} reports the computational experiments on weighted and unweighted synthetic benchmark networks, whereas Section \ref{sec:real-world} evaluates the proposed approach on large real-world networks with known community structure. Section \ref{sec:conclusions} summarizes the main findings and outlines directions for future research. Finally, Appendix \ref{app:ilp} presents an exact mixed-integer linear programming formulation of the problem, and Appendix \ref{app:measures} compares the external validation measures used in the computational analysis.
	
	\section{Null-adjusted persistence}
	\label{sec:nap}
	
	In this section, we recall the definition of the Null-Adjusted Persistence (NAP) introduced in \cite{Avellone2025}. 
	Let $G=(V, E, W)$ be a weighted, connected, and undirected graph where $V$ is the set of $n$ vertices (or nodes), $E$ is the set of $m$ edges, and $W$ is the set of edge weights. We denote by $\lvert V \rvert$ the cardinality of $V$. The subgraph induced by $U \subseteq V$ consists of the nodes in $U$ and all edges in $E$ having both endpoints in $U$ with their corresponding weights. 
	A clustering of $G$ is a vertex partition $\Pi=\{ \mathcal{C}_{1}, \dots, \mathcal{C}_{q}\}$ where $\mathcal{C}_{r} \subseteq V$, $\mathcal{C}_{r}\neq \emptyset$ and the induced subgraph $G_{\mathcal{C}_{r}}$ is connected for all $r=1,\ldots, q$. Partition elements $\mathcal{C}_{r}$ are referred to as clusters or communities.
	The set of all the admissible clusterings of a graph $G$ is denoted by $\wp(G)$. 
	\noindent Let ${\bf W}=[w_{ij}]$ be the weighted adjacency matrix, where $w_{ij}\in W$ denotes the weight of the edge connecting nodes $i$ and $j$ and $w_{ij}=0$ whenever no edge exists. We also denote by ${\bf A}=[a_{ij}]$ the associated adjacency matrix, defined by $a_{ij}=1$ if $w_{ij}>0$ and $a_{ij}=0$ otherwise. In the unweighted case, ${\bf W}={\bf A}$. The strength and degree of node $i$ are defined, respectively, as $s_i=\sum_{j=1}^{n} w_{ij}$ and $k_i=\sum_{j=1}^{n} a_{ij}$. We denote by $\mathbf{s}=(s_1,\ldots,s_n)$ and $\mathbf{k}=(k_1,\ldots,k_n)$ the strength and degree vectors, respectively.  Finally, $\operatorname{vol}({\mathcal C}) = \sum_{i\in {\mathcal C}} s_i$ is the total strength or the volume of a cluster ${\mathcal C}$, and $S=\sum_{i<j} w_{ij}$ denotes the total weight of the network. Accordingly, $\operatorname{vol}(V) = \sum_{i\in V} s_i=2S$ denotes the total volume of the network.
	
	The persistence probability (briefly, persistence) of a cluster ${\mathcal{C}}\subseteq V$ has been introduced by means of a lumped Markov chain in \cite{Piccardi2011}. Its definition can be rewritten in the following form: 
	
	\begin{equation} \label{prs1}
		{\cal P}_{\mathcal{C}}=\frac{\sum_{i,j\in {\mathcal{C}}}w_{ij}}{\sum_{i\in {\mathcal{C}}}s_{i}}
	\end{equation}
	
	Equation~\eqref{prs1} represents the fraction of the total strength of the nodes in ${\mathcal C}$ that is associated with edges whose endpoints both belong to ${\mathcal C}$. Equivalently, it can be interpreted as the probability that a random walker currently located in a node of ${\mathcal C}$ will remain in the same cluster after one step. 
	For an unweighted graph, the persistence probability becomes ${\cal P}_{\mathcal{C}}=\frac{\sum_{i,j\in {\mathcal{C}}}a_{ij}}{\sum_{i\in {\mathcal{C}}}k_{i}}$.
	
	The null-adjusted persistence $\persistence$ (NAP) of a cluster ${\mathcal{C}}\subseteq V$ (\cite{Avellone2025}) is defined as:
	
	\begin{equation}
		\label{definition1}
		\persistence=\frac{\sum_{i,j\in {\mathcal{C}}}w_{ij}}{\sum_{i\in {\mathcal{C}}}\sum_{j=1}^{n}w_{ij}}-\frac{\sum_{i,j\in {\mathcal{C}}}z_{ij}}{\sum_{i\in {\mathcal{C}}}\sum_{j=1}^{n}z_{ij}}
	\end{equation}
	
	where $\bf W$ is the weighted adjacency matrix of $G$ and $\bf Z$ is the weighted adjacency matrix of the null model, with $z_{ij}=\frac{s_{i}s_{j}}{2S}$.
	
	\noindent The first term in Eq. \eqref{definition1} is the actual persistence of ${\mathcal C}$ in the observed network data, whereas the second term represents the persistence of the same cluster of nodes under the configuration-model null hypothesis preserving the node-strength sequence, see \cite{Newman_Girvan2004,Newman2004}. Therefore, $\persistence$ measures the excess persistence, that is the extent to which the observed persistence of ${\mathcal C}$ exceeds what could be expected from node strength distribution alone: a positive value indicates that the cluster retains the random walker more strongly than would be expected in a randomized network and therefore exhibits a stronger-than-expected separation from the rest of the graph. 
	
	In an unweighted network, null-adjusted persistence $\persistence$ can be rewritten as
	
	\begin{equation}
		\persistence =
		\frac{2m_i}{2m_i+m_e}
		-
		\frac{2m_i+m_e}{2m},
		\label{definition2}
	\end{equation}
	
	where $m_i$ is the number of internal edges of cluster ${\mathcal C}$,
	$m_e$ is the number of edges connecting nodes in ${\mathcal C}$ to nodes outside
	${\mathcal C}$, and $m$ is the total number of edges in the network. Here, $2m_i+m_e$ corresponds to the volume $\operatorname{vol}({\mathcal C})$ of the cluster ${\mathcal C}$.
	
	Note that the same configuration model is also used to define modularity (see \cite{Arenas2008}). Indeed, the contribution of a cluster $\mathcal{C}$ to the modularity is defined as
	
	\begin{equation}
		\label{def_mod}    
		Q_{\mathcal{C}} = \frac{\sum_{i,j\in{\mathcal C}} w_{ij}}{2S}-\left(\frac{\sum_{i\in {\mathcal C}} s_i}{2S}\right)^2, 
	\end{equation}
	
	The modularity of a partition $\Pi=\{{\mathcal C}_1,\ldots,{\mathcal C}_q\}$ is then given by: 
	
	\begin{equation}
		\label{Mod}
		Q_{\Pi}
		=
		\sum_{{\mathcal C}\in\Pi}
		Q_{\mathcal C}.
	\end{equation}

	Similarly, for a partition $\Pi$, the total null-adjusted persistence is:
	
	\begin{equation}\label{NAP}
		{\cal P}^{\star}_{\Pi}
		=
		\sum_{{\mathcal C}\in\Pi}
		{\cal P}^{\star}_{\mathcal C}.
	\end{equation}
	
	Community detection can then be formulated as the maximization of the total NAP score over all admissible partitions:
	
	\begin{equation}
		\label{maxNAP}
		\max_{\Pi \in \wp(G)}
		{\cal P}^{\star}_{\Pi}
		=
		\max_{\Pi \in \wp(G)}
		\sum_{{\mathcal C}\in\Pi}
		{\cal P}^{\star}_{\mathcal C}.
	\end{equation}
	
	Computational results and comparison with other methods are available in \cite{Avellone2025}.
	
	\section{Scaled Null-Adjusted Persistence (Scaled-NAP)}
	\label{sec:snap}
	
	In this section, we introduce Scaled Null-Adjusted Persistence (\SNAPEx{}), a parametric family of objective functions that incorporates both null-adjusted persistence and modularity as special cases.
	
	Using the configuration-model expression $z_{ij}=\frac{s_i s_j}{2S}$  in equation \eqref{definition1}, we obtain:
	
	\begin{equation*}
		\sum_{i,j\in{\mathcal C}} z_{ij}
		=
		\frac{1}{2S}
		\left(\sum_{i\in{\mathcal C}} s_i\right)^2
		=
		\frac{\operatorname{vol}({\mathcal C})^2}{\operatorname{vol}(V)},
	\end{equation*}
	
	\begin{equation*}
		\sum_{i\in{\mathcal C}}\sum_{j=1}^n z_{ij}
		=
		\sum_{i\in{\mathcal C}}s_i\sum_{j=1}^n \frac{s_j}{2S}
		=
		\sum_{i\in{\mathcal C}} s_i
		=
		\operatorname{vol}({\mathcal C}).
	\end{equation*}
	
	Therefore, the null-adjusted persistence of ${\mathcal C}$ can be written as
	\begin{equation}
		{\cal P}^{\star}_{\mathcal C}
		=
		\frac{\sum_{i,j\in{\mathcal C}} w_{ij}}
		{\operatorname{vol}({\mathcal C})}
		-
		\frac{\operatorname{vol}({\mathcal C})}{\operatorname{vol}(V)}.
	\end{equation}
	
	Then, we define the relative volume of the cluster ${\mathcal C}$ as:
	
	\begin{equation}
		\pi_{\mathcal C}
		=
		\frac{\operatorname{vol}({\mathcal C})}{\operatorname{vol}(V)}
		=
		\frac{\sum_{i\in {\mathcal C}} s_i}{2S}.
	\end{equation}
	Multiplying the previous expression by the relative cluster volume yields:

	\begin{equation}
		\pi_{\mathcal C}{\cal P}^{\star}_{\mathcal C}
		=
		\frac{\sum_{i,j\in{\mathcal C}} w_{ij}}{\operatorname{vol}(V)}
		-
		\left(
		\frac{\operatorname{vol}({\mathcal C})}{\operatorname{vol}(V)}
		\right)^2.
	\end{equation}
	
	Recalling the expression of $Q_{\mathcal{C}}$ in \eqref{def_mod}, the right-hand side is exactly the contribution of cluster ${\mathcal C}$ to the weighted modularity.
	Hence,
	\begin{equation}
		\label{nap_modularity_relation}
		Q_{\mathcal C}
		=
		\pi_{\mathcal C}{\cal P}^{\star}_{\mathcal C}.
	\end{equation}
	
	Equation \eqref{nap_modularity_relation} reveals that NAP and modularity are two related objective functions. Their structural difference is how the relative cluster volume is weighted: modularity results from rescaling NAP by the relative volume of the cluster, while NAP is modularity divided by the relative volume of the cluster:
	\begin{equation}
		{\cal P}^{\star}_{\mathcal C}
		=
		\frac{Q_{\mathcal C}}{\pi_{\mathcal C}}.
	\end{equation}
	
	In other words, NAP can be interpreted as a density-normalized modularity contribution. This explains why, when used as an objective function, NAP tends to favour small, locally cohesive communities, whereas modularity, incorporating the relative cluster-volume factor in its equation, typically yields coarser partitions.
	
	This naturally raises the question of what happens when the influence of the relative cluster volume on the objective function can be explicitly controlled. We address this question by introducing a family of objective functions obtained by rescaling NAP by a power of the relative cluster volume.
	For $\alpha\in[0,1]$, we define the scaled null-adjusted persistence of cluster
	${\mathcal C}$ as
	
	\begin{equation}
		\label{snap_definition}
		\sNAP{\alpha}(\mathcal C)
		:=
		\pi_{\mathcal C}^{\alpha}
		{\cal P}^{\star}_{\mathcal C}.
	\end{equation}
	
	Equivalently,
	\begin{equation}
		\label{snap_weighted}
		\sNAP{\alpha}(\mathcal C)
		=
		\left(
		\frac{\operatorname{vol}({\mathcal C})}{\operatorname{vol}(V)}
		\right)^{\alpha}
		\left[
		\frac{\sum_{i,j\in{\mathcal C}} w_{ij}}
		{\operatorname{vol}({\mathcal C})}
		-
		\frac{\operatorname{vol}({\mathcal C})}{\operatorname{vol}(V)}
		\right].
	\end{equation}
	
	The endpoint cases of this family are straightforward. When $\alpha=0$, no volume scaling is applied and \SNAPEx{} coincides with NAP:
	
	\begin{equation}
		\sNAP{0}(\mathcal C)
		=
		{\cal P}^{\star}_{\mathcal C}.
	\end{equation}
	
	When $\alpha=1$, the full relative-volume scaling is applied and \SNAPEx{} coincides with the modularity contribution:
	
	\begin{equation}
		\sNAP{1}(\mathcal C)
		=
		\pi_{\mathcal C}{\cal P}^{\star}_{\mathcal C}
		=
		Q_{\mathcal C}.
	\end{equation}
	
	For a fixed value of $\alpha$, the definition is naturally extended from clusters to partitions by introducing the \SNAPEx{} score of a partition
	$\Pi=\{{\mathcal C}_1,\ldots,{\mathcal C}_q\}$ as
	
	\begin{equation}
		\label{snap_partition}
		\sNAP{\alpha}(\Pi)
		=
		\sum_{{\mathcal C}\in\Pi}
		\sNAP{\alpha}(\mathcal C).
	\end{equation}
	
	In particular, 
	\begin{equation}
		\label{snap_partition2}
		\sNAP{0}(\Pi)
		=
		{\cal P}^{\star}_{\Pi}. 
	\end{equation}
	
	and 
	\begin{equation}
		\sNAP{1}(\Pi)
		=
		{Q}_{\Pi} .
	\end{equation}
	
	Thus, NAP and modularity are recovered as the two endpoint objective functions of the \SNAPEx{} family.
	
	The corresponding community detection problem consists of finding the partition that maximizes the total \SNAPEx{} score over the set of all feasible partitions:
	
	\begin{equation}
		\label{snap_optimization}
		\max_{\Pi \in \wp(G)}
		\sNAP{\alpha}(\Pi)
		=
		\max_{\Pi \in \wp(G)}
		\sum_{{\mathcal C}\in\Pi}
		\sNAP{\alpha}(\mathcal C).
	\end{equation}

	Equation \eqref{snap_partition} defines a family of clustering statistics, and, when they are used as the objective function in \eqref{snap_optimization}, varying $\alpha$ controls the influence of the relative cluster volume on the optimal partition. 
	Small values of $\alpha$ make the index close to NAP, so we could expect an optimal partition composed of small-sized communities, while, as	$\alpha$ increases, the index is closer to modularity and therefore larger communities are promoted. This flexibility can be useful in applications where the appropriate scale of analysis cannot be determined \textit{a priori}.
	
	Finally, we report equations for the case of unweighted networks. As $\operatorname{vol}({\mathcal C})=2m_i+m_e$ and $\operatorname{vol}(V)=2m$, the relative volume becomes: 
	
	\begin{equation}
		\pi_{\mathcal C}
		=
		\frac{\operatorname{vol}({\mathcal C})}{\operatorname{vol}(V)}
		=
		\frac{2m_i+m_e}{2m}.
	\end{equation}

	In this case, \SNAPEx{} reduces to
	\begin{equation}
		\label{snap_binary}
		\sNAP{\alpha}(\mathcal C)
		=
		\left(
		\frac{2m_i+m_e}{2m}
		\right)^{\alpha}
		\left(
		\frac{2m_i}{2m_i+m_e}
		-
		\frac{2m_i+m_e}{2m}
		\right),
		\qquad
		\alpha\in[0,1].
	\end{equation}
	
	Accordingly, the two endpoint cases become
	\begin{equation}
		\sNAP{0}(\mathcal C)
		=
		{\cal P}^{\star}_{\mathcal C},
		\qquad
		\sNAP{1}(\mathcal C)
		=
		\frac{m_i}{m}
		-
		\left(
		\frac{2m_i+m_e}{2m}
		\right)^2
		=
		Q_{\mathcal C}.
	\end{equation}

	\section{Analytical Properties of the Scaled-NAP family}
	\label{sec:analytical}
	
	In this section, we investigate some analytical properties of the \SNAPEx{} family. We first derive a general condition under which merging two clusters increases the \SNAPEx{} objective function. 
	We then use this condition to investigate how the parameter $\alpha$ affects community aggregation.
	
	We denote by
	\begin{equation*}
		w_{\mathrm{in}}({\mathcal C})
		=
		\frac{1}{2}
		\sum_{i,j\in{\mathcal C}} w_{ij}
	\end{equation*}
	
	the total internal weight of ${\mathcal C}$, and, for two disjoint clusters ${\mathcal C}_1$ and
	${\mathcal C}_2$ we denote by  
	\begin{equation*}
		w_{12}
		=
		\sum_{i\in{\mathcal C}_1}
		\sum_{j\in{\mathcal C}_2}
		w_{ij}
	\end{equation*}
	
	the total weight of the edges connecting ${\mathcal C}_1$ and ${\mathcal C}_2$.
	
	With this notation, the \SNAPEx{} contribution of a cluster ${\mathcal C}$ can be written as
	\begin{equation}
		\sNAP{\alpha}(\mathcal C)
		=
		\left(
		\frac{\operatorname{vol}({\mathcal C})}{\operatorname{vol}({V})}
		\right)^{\alpha}
		\left[
		\frac{2w_{\mathrm{in}}({\mathcal C})}{\operatorname{vol}({\mathcal C})}
		-
		\frac{\operatorname{vol}({\mathcal C})}{\operatorname{vol}({V})}
		\right],
		\qquad \alpha\in[0,1].
	\end{equation}
	or, equivalently,
	\begin{equation}
		\label{snap_weighted_compact}
		\sNAP{\alpha}(\mathcal C)
		=
		\frac{2w_{\mathrm{in}}({\mathcal C})\operatorname{vol}({\mathcal C})^{\alpha-1}}
		{(\operatorname{vol}({V}))^{\alpha}}
		-				\left(\frac{\operatorname{vol}({\mathcal C})}
		{\operatorname{vol}({V})}\right)^{\alpha+1}.
	\end{equation}
	For unweighted networks, the quantities appearing in \eqref{snap_weighted_compact} admit a direct edge-count representation:
	\begin{equation}
		w_{\mathrm{in}}({\mathcal C})=m_i,
		\qquad
		\operatorname{vol}({\mathcal C})=2m_i+m_e,
		\qquad
		\operatorname{vol}({V})=2m.
	\end{equation}
	Similarly, for two disjoint clusters ${\mathcal C}_1$ and ${\mathcal C}_2$, the
	quantity $w_{12}$ coincides with the number of edges connecting the two clusters, usually denoted by $m_e^{(12)}$ (see \cite{Avellone2025}).
	
	\subsection{Merging clusters}
	
	In \cite{Avellone2025} we discussed the following properties: the convenience of merging two clusters (and therefore finding optimal communities of different sizes) may depend on $S$, the total weight of the network, as is the case of modularity, or be independent of it, as is the case of NAP. In the former case, we say that modularity is scale-dependent, since increasing the network size makes merging two communities more convenient. In the latter case, we say that NAP is scale-invariant, because the convenience of merging two clusters is independent of $S$. These properties provide a structural explanation for why optimal NAP partitions can detect small clusters that are instead merged into larger communities when optimal partitions are obtained by modularity. Therefore, a critical issue in \eqref{snap_optimization} is determining whether merging two clusters improves the value of the objective function and how this condition depends on $\alpha$. We first consider weighted networks. Let ${\mathcal C}_1$ and ${\mathcal C}_2$ be two disjoint clusters such that ${\mathcal C}_1\cup {\mathcal C}_2$
	induces a connected subgraph, and denote their internal weights and volumes, respectively, by
	
	\begin{equation*}
		u_r=w_{\mathrm{in}}({\mathcal C}_r),
		\qquad
		v_r=\operatorname{vol}({\mathcal C}_r),
		\qquad r=1,2.
	\end{equation*}
	
	\noindent If ${\mathcal C}={\mathcal C}_1\cup{\mathcal C}_2$ denotes the merged cluster, then
	
	\begin{equation*}
		w_{\mathrm{in}}({\mathcal C})=u_1+u_2+w_{12},
		\qquad
		\operatorname{vol}({\mathcal C})=v_1+v_2.
	\end{equation*}
	
	\noindent For $\alpha\in[0,1]$, we define the merging variation
	\begin{equation*}
		\Delta \sNAP{\alpha}
		=
		\sNAP{\alpha}(\mathcal C)
		-
		\sNAP{\alpha}({\mathcal C}_1)
		-
		\sNAP{\alpha}({\mathcal C}_2).
	\end{equation*}
	
	Merging ${\mathcal C}_1$ and ${\mathcal C}_2$ improves the objective function  whenever
	$\Delta \sNAP{\alpha}>0$.
	The following proposition provides an explicit threshold condition for the merge.
	
	\begin{proposition}
		\label{snap_merging_weighted}
		
		Let ${\mathcal C}_1$ and ${\mathcal C}_2$ be two disjoint clusters in a weighted undirected network such that ${\mathcal C}_1\cup{\mathcal C}_2$ induces a connected subgraph. For every $\alpha\in[0,1]$, merging ${\mathcal C}_1$ and ${\mathcal C}_2$ increases the \SNAPEx{} objective, i.e. $\Delta \sNAP{\alpha}>0 $, if and only if
		
		\begin{equation}
			\label{snap_weighted_threshold}
			w_{12} > T_{\alpha}({\mathcal C}_1,{\mathcal C}_2),
		\end{equation}
		
		where the merging threshold $T_{\alpha}({\mathcal C}_1,{\mathcal C}_2)$ is given by:
		\begin{equation}
			\label{snap_weighted_threshold_def}
			\begin{split}
				T_{\alpha}({\mathcal C}_1,{\mathcal C}_2)
				={}&
				u_1\left[
				\left(\frac{v_1+v_2}{v_1}\right)^{1-\alpha}-1\right]
				+ u_2
				\left[\left(\frac{v_1+v_2}{v_2}
				\right)^{1-\alpha}-1\right] \\
				&+\frac{(v_1+v_2)^{\alpha+1}
					-v_1^{\alpha+1}-
					v_2^{\alpha+1}}
				{4S\,(v_1+v_2)^{\alpha-1}
				}.
			\end{split}
		\end{equation}
	\end{proposition}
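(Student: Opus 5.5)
The plan is a direct computation from the compact form \eqref{snap_weighted_compact}, followed by isolating $w_{12}$ and dividing by a positive factor. Write $V_0=\operatorname{vol}(V)=2S$ and $v=v_1+v_2$. For any cluster, \eqref{snap_weighted_compact} gives $\sNAP{\alpha}(\mathcal C)=V_0^{-\alpha}\bigl[2w_{\mathrm{in}}(\mathcal C)\,\operatorname{vol}(\mathcal C)^{\alpha-1}-\operatorname{vol}(\mathcal C)^{\alpha+1}/V_0\bigr]$. Substituting the merged quantities $w_{\mathrm{in}}(\mathcal C)=u_1+u_2+w_{12}$ and $\operatorname{vol}(\mathcal C)=v$ then yields
\begin{equation*}
V_0^{\alpha}\,\Delta\sNAP{\alpha}
= 2(u_1+u_2+w_{12})v^{\alpha-1}-2u_1v_1^{\alpha-1}-2u_2v_2^{\alpha-1}
-\frac{v^{\alpha+1}-v_1^{\alpha+1}-v_2^{\alpha+1}}{V_0}.
\end{equation*}

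Since $V_0^{\alpha}>0$, the sign of $\Delta\sNAP{\alpha}$ equals the sign of the right-hand side. That side is affine in $w_{12}$ with coefficient $2v^{\alpha-1}>0$, because $v>0$. Dividing by $2v^{\alpha-1}$ shows that $\Delta\sNAP{\alpha}>0$ if and only if
\begin{equation*}
w_{12} > u_1\bigl[(v_1^{\alpha-1}/v^{\alpha-1})-1\bigr]+u_2\bigl[(v_2^{\alpha-1}/v^{\alpha-1})-1\bigr]+\frac{v^{\alpha+1}-v_1^{\alpha+1}-v_2^{\alpha+1}}{2V_0\,v^{\alpha-1}}.
\end{equation*}
To match \eqref{snap_weighted_threshold_def}, rewrite $v_r^{\alpha-1}/v^{\alpha-1}=(v/v_r)^{1-\alpha}$ and $2V_0=4S$. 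This gives exactly $T_\alpha(\mathcal C_1,\mathcal C_2)$.

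No step is a genuine obstacle; the only care needed is with signs and positivity. First, the volumes must be positive, so that $v_r^{\alpha-1}$ is well defined even for $\alpha<1$. This holds because the graph is connected and the clusters are nonempty, hence every node has positive strength. Second, the division must be by a positive quantity so that the inequality direction is preserved. The connectivity hypothesis on $\mathcal C_1\cup\mathcal C_2$ is needed only for admissibility of the merged cluster and plays no role in the algebra. As a sanity check, at $\alpha=1$ the threshold should reduce to the modularity condition $w_{12}>v_1v_2/(2S)$: the $u_r$ terms vanish and the last term becomes $(2v_1v_2)/(4S)$, as expected.
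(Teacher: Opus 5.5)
Your proposal is correct and follows essentially the same route as the paper. Both expand $\Delta\sNAP{\alpha}$ from the compact form, clear a positive power of $2S$ (you use $(2S)^{\alpha}$, the paper uses $(2S)^{\alpha+1}$), isolate $w_{12}$ by dividing by the positive coefficient $2(v_1+v_2)^{\alpha-1}$, and rewrite $v_r^{\alpha-1}/(v_1+v_2)^{\alpha-1}$ as $\bigl((v_1+v_2)/v_r\bigr)^{1-\alpha}$; your explicit checks that the volumes and the divisor are positive make precise a step the paper leaves implicit.
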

	
	\begin{proof}
		Recall from Equation~\eqref{snap_weighted_compact} that
		\begin{equation*}
			\sNAP{\alpha}(\mathcal C)
			=
			\frac{2w_{\mathrm{in}}({\mathcal C})
				\operatorname{vol}({\mathcal C})^{\alpha-1}}
			{(2S)^{\alpha}}
			-\frac{\operatorname{vol}({\mathcal C})^{\alpha+1}}
			{(2S)^{\alpha+1}}.
		\end{equation*}
		
		For the merged cluster ${\mathcal C}={\mathcal C}_1\cup{\mathcal C}_2$, we have
		\begin{equation*}
			\sNAP{\alpha}(\mathcal C)
			=
			\frac{
				2(u_1+u_2+w_{12})(v_1+v_2)^{\alpha-1}}{(2S)^{\alpha}}
			-	\frac{
				(v_1+v_2)^{\alpha+1}}	{(2S)^{\alpha+1}}.
		\end{equation*}
		
		Similarly, for $r=1,2$,	\begin{equation*}
			\sNAP{\alpha}({\mathcal C}_r)
			=
			\frac{	2u_r v_r^{\alpha-1}	}	{(2S)^{\alpha}}
			-\frac{	v_r^{\alpha+1}}{(2S)^{\alpha+1}}.
		\end{equation*}
		
		Therefore,
		\begin{equation*}
			\begin{split}
				\Delta \sNAP{\alpha}
				={}&
				\frac{2(u_1+u_2+w_{12})(v_1+v_2)^{\alpha-1}
					-	2u_1v_1^{\alpha-1}
					-	2u_2v_2^{\alpha-1}}
				{(2S)^{\alpha}} \\
				&-	\frac{
					(v_1+v_2)^{\alpha+1}
					-	v_1^{\alpha+1}
					- v_2^{\alpha+1}}
				{(2S)^{\alpha+1}}.
			\end{split}
		\end{equation*}
		Since $(2S)^{\alpha+1}>0$, the condition
		$\Delta \sNAP{\alpha}>0$ is equivalent to
		\begin{equation*}
			\begin{split}
				&2(2S)(u_1+u_2+w_{12})(v_1+v_2)^{\alpha-1}
				-
				2(2S)u_1v_1^{\alpha-1}
				-
				2(2S)u_2v_2^{\alpha-1} \\
				&\hspace{4cm}>
				(v_1+v_2)^{\alpha+1}
				-
				v_1^{\alpha+1}
				-
				v_2^{\alpha+1}.
			\end{split}
		\end{equation*}
		Isolating $w_{12}$ gives
		\begin{equation*}
			\begin{split}
				w_{12}
				>{}&
				u_1
				\frac{v_1^{\alpha-1}}{(v_1+v_2)^{\alpha-1}}
				+
				u_2
				\frac{v_2^{\alpha-1}}{(v_1+v_2)^{\alpha-1}}
				-
				(u_1+u_2) \\
				&+
				\frac{
					(v_1+v_2)^{\alpha+1}
					-
					v_1^{\alpha+1}
					-
					v_2^{\alpha+1}
				}
				{
					4S(v_1+v_2)^{\alpha-1}
				}.
			\end{split}
		\end{equation*}
		Equivalently, it can be rewritten as
		\begin{equation*}
			\begin{split}
				w_{12}
				>{}&
				u_1
				\left[
				\left(
				\frac{v_1+v_2}{v_1}
				\right)^{1-\alpha}
				-1
				\right]
				+
				u_2
				\left[
				\left(
				\frac{v_1+v_2}{v_2}
				\right)^{1-\alpha}
				-1
				\right] \\
				&+
				\frac{
					(v_1+v_2)^{\alpha+1}
					-
					v_1^{\alpha+1}
					-
					v_2^{\alpha+1}
				}
				{
					4S\,(v_1+v_2)^{\alpha-1}
				}.
			\end{split}
		\end{equation*}
		which is precisely condition~\eqref{snap_weighted_threshold}.
	\end{proof}
	
	We establish the corresponding result for unweighted networks:
	\begin{corollary}
		\label{snap_merging_binary}
		Let ${\mathcal C}_1$ and ${\mathcal C}_2$ be two disjoint clusters in an unweighted
		undirected network such that ${\mathcal C}_1\cup{\mathcal C}_2$ induces a connected
		subgraph. For every $\alpha\in[0,1]$, merging ${\mathcal C}_1$ and ${\mathcal C}_2$ increases the \SNAPEx{} objective, i.e. $\Delta \sNAP{\alpha}>0 $, if and only if
		\begin{equation}
			\label{snap_binary_threshold}
			m_e^{(12)}	>
			T_{\alpha}^{\mathrm{bin}}({\mathcal C}_1,{\mathcal C}_2),
		\end{equation}
		
		where $m_e^{(12)}$ denotes the number of edges connecting
		${\mathcal C}_1$ and ${\mathcal C}_2$, while, for $r=1,2$,
		$m_i^{(r)}$ and $m_e^{(r)}$ denote, respectively, the number of internal edges of ${\mathcal C}_r$ and the number of edges connecting ${\mathcal C}_r$ to the rest of the network.
		\begin{equation}
			\label{snap_binary_threshold_def}
			\begin{split}
				&T_{\alpha}^{\mathrm{bin}}({\mathcal C}_1,{\mathcal C}_2)\\
				&=	m_i^{(1)}
				\left[\left(\frac{2m_i^{(1)}+m_e^{(1)}+2m_i^{(2)}+m_e^{(2)}}{2m_i^{(1)}+m_e^{(1)}}
				\right)^{1-\alpha} -1\right] \\
				&+
				m_i^{(2)}
				\left[
				\left(
				\frac{
					2m_i^{(1)}+m_e^{(1)}+2m_i^{(2)}+m_e^{(2)}}{2m_i^{(2)}+m_e^{(2)}	}
				\right)^{1-\alpha}
				-1	\right] \\
				&+	\frac{\left(2m_i^{(1)}+m_e^{(1)}+2m_i^{(2)}+m_e^{(2)}\right)^{\alpha+1}
					-
					\left(2m_i^{(1)}+m_e^{(1)}\right)^{\alpha+1}
					-\left(2m_i^{(2)}+m_e^{(2)}\right)^{\alpha+1}}
				{4m \left(2m_i^{(1)}+m_e^{(1)}+2m_i^{(2)}+m_e^{(2)}\right)^{\alpha-1}	}.
			\end{split}
		\end{equation}
		
	\end{corollary}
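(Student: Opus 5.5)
The corollary follows by specializing Proposition~\ref{snap_merging_weighted} to the unweighted case; no new analysis is needed. We view the unweighted network as a weighted network with ${\bf W}={\bf A}$, so all edge weights are $0$ or $1$. The hypotheses of the proposition then hold verbatim: ${\mathcal C}_1,{\mathcal C}_2$ are disjoint and their union induces a connected subgraph. Hence, for every $\alpha\in[0,1]$, $\Delta\sNAP{\alpha}>0$ if and only if $w_{12}>T_{\alpha}({\mathcal C}_1,{\mathcal C}_2)$. It remains to translate each quantity in \eqref{snap_weighted_threshold_def} into edge counts.

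The dictionary follows the identifications stated after \eqref{snap_weighted_compact}.

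\textbf{Internal weights.} We have $u_r=w_{\mathrm{in}}({\mathcal C}_r)=\frac12\sum_{i,j\in{\mathcal C}_r}a_{ij}=m_i^{(r)}$, since each internal edge is counted twice in the double sum.

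\textbf{Volumes.} We have $v_r=\operatorname{vol}({\mathcal C}_r)=\sum_{i\in{\mathcal C}_r}k_i=2m_i^{(r)}+m_e^{(r)}$. Each internal edge contributes two endpoint degrees inside ${\mathcal C}_r$, and each boundary edge contributes one.

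\textbf{Cross weight and total weight.} We have $w_{12}=\sum_{i\in{\mathcal C}_1}\sum_{j\in{\mathcal C}_2}a_{ij}=m_e^{(12)}$, and $S=\sum_{i<j}a_{ij}=m$, so $4S=4m$.

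It is worth noting that $m_e^{(r)}$ counts \emph{all} edges leaving ${\mathcal C}_r$, including those to the other cluster. This is consistent with the definition of $v_r$ as the full volume, and the merged volume is simply $v_1+v_2$.

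Substituting $u_r$, $v_r$, $w_{12}$ and $S$ into \eqref{snap_weighted_threshold_def} gives exactly \eqref{snap_binary_threshold_def}. The term-by-term correspondence is as follows:
\begin{itemize}
\item the ratios $(v_1+v_2)/v_r$ become the displayed fractions of volume sums;
\item the numerator $(v_1+v_2)^{\alpha+1}-v_1^{\alpha+1}-v_2^{\alpha+1}$ becomes the corresponding expression in $2m_i^{(r)}+m_e^{(r)}$;
\item the denominator $4S(v_1+v_2)^{\alpha-1}$ becomes $4m(\cdot)^{\alpha-1}$.
\end{itemize}
The inequality $w_{12}>T_\alpha$ thus becomes \eqref{snap_binary_threshold}, and the equivalence is preserved because the substitution is an identity of quantities rather than a manipulation of the inequality.

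The only point requiring any care is confirming the volume identity $\operatorname{vol}({\mathcal C}_r)=2m_i^{(r)}+m_e^{(r)}$ and the factor $\frac12$ in $w_{\mathrm{in}}$. Both are standard degree-counting facts already recorded in the paper, so no step is a genuine obstacle.
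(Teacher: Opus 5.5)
Your proposal is correct and is essentially the paper's own proof: specialize Proposition~\ref{snap_merging_weighted} with $u_r=m_i^{(r)}$, $v_r=2m_i^{(r)}+m_e^{(r)}$, $w_{12}=m_e^{(12)}$ and $S=m$, then substitute. Your remark that $m_e^{(r)}$ also counts the edges to the other cluster, so that $v_1+v_2$ is the merged volume, is a useful clarification that the paper leaves implicit.
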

	
	\begin{proof}
		The proof follows immediately from Proposition \ref{snap_merging_weighted} by observing that the total internal weight of a cluster coincides with the number of its internal edges, while its volume is the sum of the degrees of its nodes. Hence, for $r=1,2$, $u_r=m_i^{(r)}$ and $v_r=2m_i^{(r)}+m_e^{(r)}$. Moreover, the inter-cluster weight $w_{12}$ coincides with the number of edges connecting the two clusters, so that $w_{12}=m_e^{(12)}$ and the total weight of the network is the total number of edges, $S=m$.
		
	\end{proof}

	\begin{remark}
		The merging threshold in  \eqref{snap_weighted_threshold_def} correctly recovers the two endpoints of the family.
		\begin{itemize}
			\item If $\alpha=0$, then the last term 
			vanishes and merging does not depend on $S$:
			\begin{equation*}
				T_0({\mathcal C}_1,{\mathcal C}_2)
				=
				\frac{v_2}{v_1} u_1
				+
				\frac{v_1}{v_2} u_2
			\end{equation*}
			that is, in the unweighted case,
			\begin{equation*}
				m_e^{(12)}	> \frac{2m_i^{(2)}+m_e^{(2)}}{2m_i^{(1)}+m_e^{(1)}}\,m_i^{(1)}
				+
				\frac{2m_i^{(1)}+m_e^{(1)}}{2m_i^{(2)}+m_e^{(2)}}\,m_i^{(2)},
			\end{equation*}
			which is exactly formula (14) discussed in \cite{Avellone2025}.
			\item If $\alpha=1$, then $\sNAP{1}(\mathcal C)$ is the contribution of $\mathcal C$ to modularity, and the corresponding merging threshold depends on $S$:
			\begin{equation*}
				T_1({\mathcal C}_1,{\mathcal C}_2)=\frac{v_1v_2}{2S}
			\end{equation*}
			that is, in the unweighted case,
			\begin{equation*}
				m_e^{(12)}
				> \frac{\bigl(2m_i^{(1)}+m_e^{(1)}\bigr)\bigl(2m_i^{(2)}+m_e^{(2)}\bigr)}{2m},
			\end{equation*}
			which is precisely formula (15) discussed in \cite{Avellone2025}.		\end{itemize}
	\end{remark}
	
	Notice that the two endpoint cases exhibit different behaviours with respect to the total weight of the network $S$: the NAP merging threshold is independent of $S$, while the modularity threshold explicitly depends on it. 
	We now investigate how this scale dependence emerges within the \SNAPEx{} family. Indeed, Proposition~\ref{snap_merging_weighted} shows that the dependence on $S$ of merging two clusters appears as soon as $\alpha>0$. Since the third term of the sum in Eq. \eqref{snap_weighted_threshold_def} can be rewritten as
	\begin{equation*}
		\frac{
			(v_1+v_2)^{\alpha+1}
			-
			v_1^{\alpha+1}
			-
			v_2^{\alpha+1}
		}
		{
			4S\,(v_1+v_2)^{\alpha-1}
		}=\frac{(v_1+v_2)^2}{4S}\left[1-\left(\frac{v_1}{v_1+v_2}\right)^{\alpha+1}-\left(\frac{v_2}{v_1+v_2}\right)^{\alpha+1}\right]    
	\end{equation*}	
	
	For $\alpha>0$, this term is positive and inversely proportional to $S$. Hence, for fixed local properties of the two clusters, the merging threshold decreases as the total network weight $S$ increases.
	
	We now investigate the behaviour of the merging threshold in dependence on $\alpha$. We prove the following result:
	
	\begin{proposition}
		\label{snap_threshold_monotonicity}
		Let ${\mathcal C}_1$ and ${\mathcal C}_2$ satisfy the assumptions of
		Proposition~\ref{snap_merging_weighted}. If ${\cal P}^{\star}_{{\mathcal C}_1}>0$  and ${\cal P}^{\star}_{{\mathcal C}_2}>0$,
		then the merging threshold
		$T_{\alpha}({\mathcal C}_1,{\mathcal C}_2)$ is strictly decreasing
		in $\alpha\in[0,1]$.
	\end{proposition}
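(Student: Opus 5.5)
The plan is to rewrite $T_{\alpha}$ in terms of the relative volumes of the two clusters inside the merged one, differentiate with respect to $\alpha$, and show that the derivative is strictly negative on $[0,1]$. The hypothesis ${\cal P}^{\star}_{{\mathcal C}_r}>0$ should then supply the sign of each term.

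Set $V_{12}=v_1+v_2$ and $x_r=v_r/V_{12}\in(0,1)$ for $r=1,2$. Using the rewriting of the third term given just before the statement, \eqref{snap_weighted_threshold_def} becomes
\begin{equation*}
T_{\alpha}({\mathcal C}_1,{\mathcal C}_2)=u_1\bigl(x_1^{\alpha-1}-1\bigr)+u_2\bigl(x_2^{\alpha-1}-1\bigr)+\frac{V_{12}^2}{4S}\Bigl[1-x_1^{\alpha+1}-x_2^{\alpha+1}\Bigr].
\end{equation*}
This is a finite sum of smooth functions of $\alpha$, since $x_r>0$, so I can differentiate term by term.

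Using $\frac{d}{d\alpha}x^{\beta+\alpha}=x^{\beta+\alpha}\ln x$, I get
\begin{equation*}
\frac{\partial T_{\alpha}}{\partial\alpha}=\sum_{r=1}^{2}\ln x_r\Bigl[u_r x_r^{\alpha-1}-\frac{V_{12}^2}{4S}x_r^{\alpha+1}\Bigr]=\sum_{r=1}^{2}x_r^{\alpha-1}\ln x_r\Bigl[u_r-\frac{v_r^2}{4S}\Bigr],
\end{equation*}
where the second equality uses $V_{12}^2x_r^2=v_r^2$. The key step is grouping the two pieces that belong to the same cluster, so that the bracket depends only on ${\mathcal C}_r$.

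Now I interpret the bracket. By the expression for NAP in terms of internal weight and volume, ${\cal P}^{\star}_{{\mathcal C}_r}=\frac{2u_r}{v_r}-\frac{v_r}{2S}$. Hence ${\cal P}^{\star}_{{\mathcal C}_r}>0$ is equivalent to $u_r>\frac{v_r^2}{4S}$, so each bracket is strictly positive. Since $0<x_r<1$, we have $\ln x_r<0$ and $x_r^{\alpha-1}>0$. Each summand is therefore strictly negative, and so is their sum, for every $\alpha\in[0,1]$. A function with a strictly negative derivative on an interval is strictly decreasing there, which proves the claim.

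The only delicate point is the grouping in the derivative. If the $u_r$-terms and the $S$-dependent term are differentiated separately, they have opposite signs and no conclusion follows. One must recognize that $V_{12}^2x_r^{\alpha+1}=v_r^2x_r^{\alpha-1}$, so that the $S$-term splits cluster by cluster and pairs exactly with the NAP positivity condition.
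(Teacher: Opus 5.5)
Your proof is correct and follows essentially the same route as the paper: differentiate $T_{\alpha}$ in $\alpha$, regroup the $S$-dependent term cluster by cluster, and observe that each bracket is a fixed-sign multiple of ${\cal P}^{\star}_{{\mathcal C}_r}$. Your identity $u_r-\tfrac{v_r^2}{4S}=\tfrac{v_r}{2}{\cal P}^{\star}_{{\mathcal C}_r}$ is the paper's $-\tfrac{u_r}{v_r}+\tfrac{v_r}{4S}=-\tfrac12{\cal P}^{\star}_{{\mathcal C}_r}$ multiplied by $-v_r$, and your derivative agrees term by term with the paper's grouped expression; the substitution $x_r=v_r/(v_1+v_2)$ only makes the bookkeeping cleaner.
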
   
	
	\begin{proof}
		
		We compute the first derivative of Eq. \eqref{snap_weighted_threshold_def} with respect to $\alpha$:
		\begin{equation}
			\begin{aligned}
				T'_{\alpha}(\mathcal C_1,\mathcal C_2)
				={}&
				u_1\left[-\left(\frac{v_1+v_2}{v_1}
				\right)^{1-\alpha}\ln\left(\frac{v_1+v_2}{v_1}
				\right)\right]
				+u_2\left[-\left(\frac{v_1+v_2}{v_2}
				\right)^{1-\alpha}\ln\left(\frac{v_1+v_2}{v_2}
				\right)\right]\\
				&+\frac{(v_1+v_2)^2}{4S}
				\left[-\left(\frac{v_1}{v_1+v_2}
				\right)^{\alpha+1}\ln\left(\frac{v_1}{v_1+v_2}\right)\right.
				\left.
				-\left(\frac{v_2}{v_1+v_2}\right)^{\alpha+1}
				\ln\left(\frac{v_2}{v_1+v_2}\right)\right].
			\end{aligned}
		\end{equation}
		
		This expression can be written as
		
		\begin{equation}
			\begin{aligned}
				T'_{\alpha}(\mathcal C_1,\mathcal C_2)
				={}&
				-u_1\left(\frac{v_1+v_2}{v_1}
				\right)^{1-\alpha}\ln\left(
				\frac{v_1+v_2}{v_1}\right)
				-u_2\left(\frac{v_1+v_2}{v_2}
				\right)^{1-\alpha}
				\ln\left(\frac{v_1+v_2}{v_2}
				\right)
				\\
				&+\frac{v_1+v_2}{4S}
				\left[v_1
				\left(\frac{v_1}{v_1+v_2}
				\right)^\alpha
				\ln\left(\frac{v_1+v_2}{v_1}
				\right)\right.\left.
				+v_2\left(\frac{v_2}{v_1+v_2}
				\right)^\alpha
				\ln\left(\frac{v_1+v_2}{v_2}
				\right)\right].
			\end{aligned}
		\end{equation}
		
		Collecting the terms associated with each cluster we obtain
		\begin{equation}
			\begin{aligned}
				T'_{\alpha}(\mathcal C_1,\mathcal C_2)
				=
				(v_1+v_2)
				\Bigg\{
				&
				\left(
				-\frac{u_1}{v_1}
				+\frac{v_1}{4S}
				\right)
				\ln\left(
				\frac{v_1+v_2}{v_1}
				\right)
				\left(
				\frac{v_1}{v_1+v_2}
				\right)^{\alpha}
				\\
				&+
				\left(
				-\frac{u_2}{v_2}
				+\frac{v_2}{4S}
				\right)
				\ln\left(
				\frac{v_1+v_2}{v_2}
				\right)
				\left(
				\frac{v_2}{v_1+v_2}
				\right)^{\alpha}
				\Bigg\}.
			\end{aligned}
		\end{equation}
		
		The sign of $T'_\alpha(\mathcal{C}_1,\mathcal{C}_2)$ therefore depends on the terms
		\begin{equation*}
			-\frac{u_r}{v_r}
			+\frac{v_r}{4S}=-\frac{1}{2}P_{\mathcal{C}_r}^{\star}
			,\qquad r=1,2.
		\end{equation*}
		
		We deduce that, if $P_{\mathcal{C}_1}^{\star}>0$ and $P_{\mathcal{C}_2}^{\star}>0$, then the threshold decreases as $\alpha$ increases. 
	\end{proof}
	
	As a numerical example, consider the unweighted network $G$ represented in Figure \ref{fig1}. 
	In this case, $m_i^{(1)}=4$, $m_e^{(1)}=7$, $m_i^{(2)}=3$, and $m_e^{(2)}=6$, so that $v_1=15$ and $v_2=12$. Therefore, the merging threshold corresponding to $\alpha=0$ is
	\begin{equation*}
		T_{0}^{\mathrm{bin}}({\mathcal C}_1,{\mathcal C}_2)
		=
		\frac{12}{15}\,4+\frac{15}{12}\,3
		=
		\frac{139}{20}
		=	6.95.
	\end{equation*}
	
	\begin{figure}[H]
		\centering
		
		\includegraphics[width=0.6\textwidth]{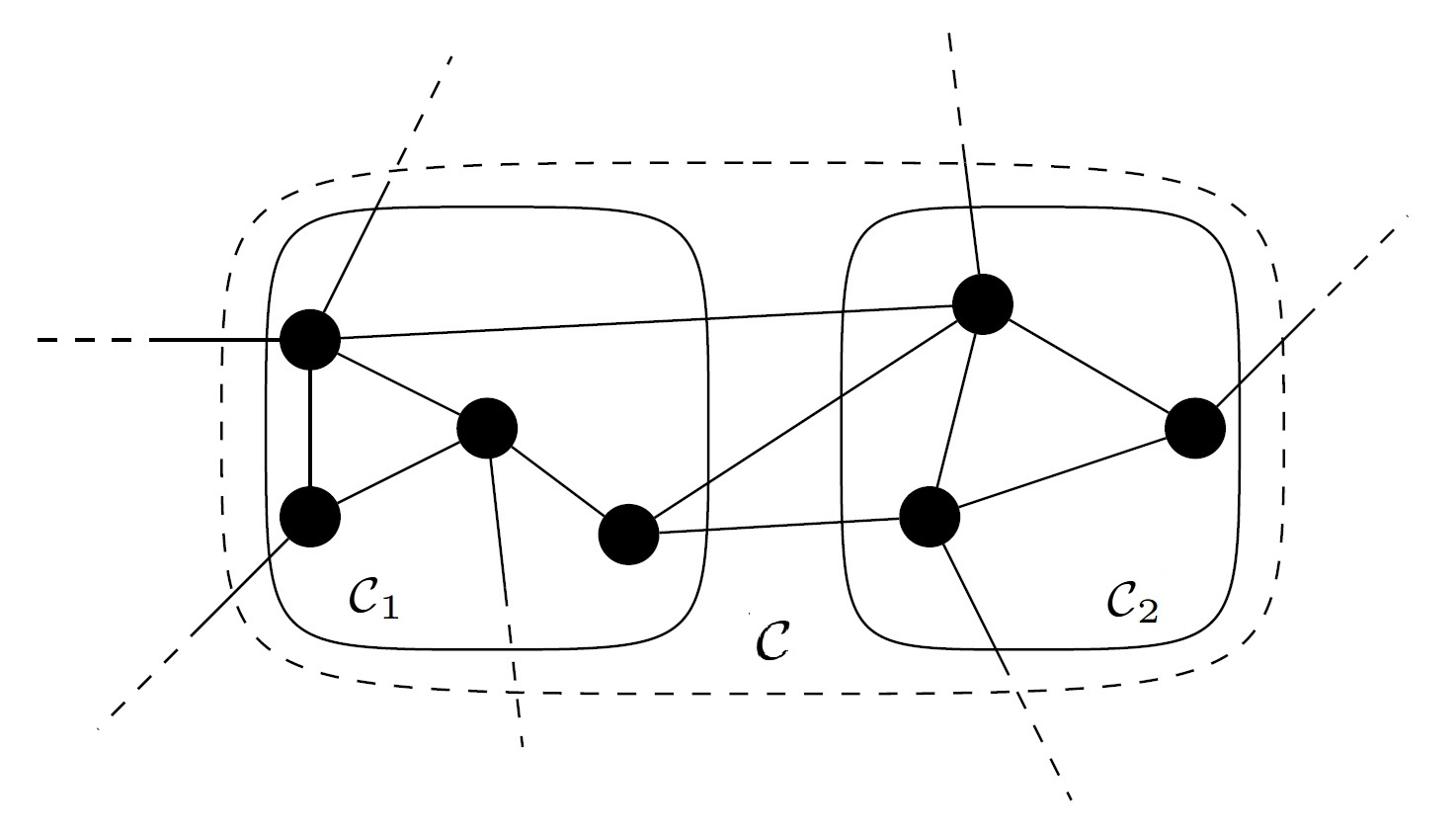}
		\caption{The merging condition of two clusters for the unweighted network $G$.}
		\label{fig1} 
	\end{figure}
	
	Hence, at least 
	$7$ inter-cluster edges are required   for merging to improve the NAP objective. Since the observed number of inter-cluster edges is
	$m_e^{(12)}=3$, the merging condition is not satisfied for $\alpha=0$.
	
	Figure~\ref{fig2} shows how the merging threshold varies with $\alpha$ for
	different values of the total number of edges $m$. For the three network
	sizes considered, both clusters have positive NAP:
	\begin{center}
		\begin{tabular}{c|cc}
			\hline
			$m$ & ${\cal P}^{\star}_{{\mathcal C}_1}$ 
			& ${\cal P}^{\star}_{{\mathcal C}_2}$ \\
			\hline
			20  & 0.158 & 0.200 \\
			50  & 0.383 & 0.380 \\
			100 & 0.458 & 0.440 \\
			\hline
		\end{tabular}
	\end{center}
	
	Thus, the assumptions of Proposition~\ref{snap_threshold_monotonicity} are satisfied for all three cases.
	Consistently with this result, the merging threshold decreases as $\alpha$ increases in all three cases, as shown in Figure \ref{fig2}.
	Moreover, for fixed $\alpha>0$, increasing the total number of edges $m$ further lowers the threshold. The figure illustrates the scale dependence of \SNAPEx{}: for the same local structure of the two clusters, different network sizes may lead to different merging decisions.
	
	\begin{figure}[ht]
		\centering
		\begin{tikzpicture}
			\begin{axis}[
				width=0.90\textwidth,
				height=0.60\textwidth,
				xmin=0, xmax=1,
				ymin=0, ymax=8,
				xlabel={$\alpha$},
				ylabel={$T_{\alpha}^{\mathrm{bin}}({\mathcal C}_1,{\mathcal C}_2)$},
				grid=both,
				grid style={line width=.1pt, draw=gray!25},
				major grid style={line width=.2pt, draw=gray!45},
				tick align=outside,
				tick style={black},
				axis line style={black},
				legend style={
					at={(0.25,0.35)},
					anchor=north east,
					draw=none,
					fill=white,
					fill opacity=0.85,
					text opacity=1
				},
				legend cell align={left},
				samples=200,
				domain=0:1
				]
				
				
				\addplot[very thick, blue]
				{
					4*(pow(27/15,1-x)-1)
					+
					3*(pow(27/12,1-x)-1)
					+
					(
					pow(27,x+1)-pow(15,x+1)-pow(12,x+1)
					)/(4*20*pow(27,x-1))
				};
				\addlegendentry{$m=20$}
				
				\addplot[very thick, teal!70!black]
				{
					4*(pow(27/15,1-x)-1)
					+
					3*(pow(27/12,1-x)-1)
					+
					(
					pow(27,x+1)-pow(15,x+1)-pow(12,x+1)
					)/(4*50*pow(27,x-1))
				};
				\addlegendentry{$m=50$}
				
				\addplot[very thick, orange!90!black]
				{
					4*(pow(27/15,1-x)-1)
					+
					3*(pow(27/12,1-x)-1)
					+
					(
					pow(27,x+1)-pow(15,x+1)-pow(12,x+1)
					)/(4*100*pow(27,x-1))
				};
				\addlegendentry{$m=100$}
				
				\addplot[densely dashed, very thick, red!70!black] coordinates {(0,3) (1,3)};
				\addlegendentry{$m_e^{(12)}=3$}
				
				\addplot[dotted, very thick, black] coordinates {(0,6.95) (1,6.95)};
				\addlegendentry{$T_0=139/20$}
				
				\node[anchor=west] at (axis cs:0.02,7.3)
				{\small NAP threshold};
				
				\node[anchor=west, red!70!black] at (axis cs:0.03,3.25)
				{\small observed inter-cluster edges};
				
			\end{axis}
		\end{tikzpicture}
		\caption{Merging threshold $T_{\alpha}^{\mathrm{bin}}({\mathcal C}_1,{\mathcal C}_2)$ as a function of $\alpha$ for the unweighted example with
			$m_i^{(1)}=4$, $m_i^{(2)}=3$, $m_e^{(1)}=7$, and $m_e^{(2)}=6$.
			For $\alpha=0$, the threshold is independent of the network size and equals
			$139/20=6.95$. For $\alpha>0$, the threshold depends on the total number of edges $m$ through a term proportional to $1/m$. The dashed horizontal line represents the observed value $m_e^{(12)}=3$.}
		\label{fig2}
	\end{figure}
	
	\clearpage
	
	\subsection{Caveman graphs}
	
	A well-known drawback of modularity is the so-called resolution limit, namely the possibility of merging two communities that should instead be detected as separate, see \cite{2007Fortunato,Avellone2025}. 
	A classical network structure for illustrating this limit is provided by connected Caveman graphs, introduced in \cite{Watts1999} (see Figure \ref{fig:caveman}). These graphs are constructed from a collection of cliques representing tightly-knit communities, called \textit{caves}.
	Starting from each clique, one internal edge is removed and rewired to connect it to a neighbouring cave. By arranging these inter-cave edges cyclically, each cave is connected to exactly two neighbouring caves, with a single edge connecting each pair of adjacent caves. Modularity may favour partitions in which adjacent caves are merged, rather than preserving the individual caves as separate communities (see \cite{2007Fortunato}). Conversely, NAP preserves the natural partition into individual caves \cite{Avellone2025}.

	\begin{figure}[H]
		\centering
		\begin{tikzpicture}[
			scale=0.55,
			every node/.style={draw, circle, minimum size=0.36cm, inner sep=0pt}
			]
			\node (C11) at (0.00, 4.80) {1};
			\node (C12) at (0.80, 4.00) {2};
			\node (C13) at (0.00, 3.20) {3};
			\node (C14) at (-0.80, 4.00) {4};
			
			\node (C21) at (4.57, 1.48) {6};
			\node (C22) at (4.05, 0.48) {7};
			\node (C23) at (3.04, 0.99) {8};
			\node (C24) at (3.56, 2.00) {5};
			
			\node (C31) at (2.82, -3.88) {10};
			\node (C32) at (1.70, -3.71) {11};
			\node (C33) at (1.88, -2.59) {12};
			\node (C34) at (3.00, -2.77) {9};
			
			\node (C41) at (-2.82, -3.88) {14};
			\node (C42) at (-3.00, -2.77) {15};
			\node (C43) at (-1.88, -2.59) {16};
			\node (C44) at (-1.70, -3.71) {13};
			
			\node (C51) at (-4.57, 1.48) {18};
			\node (C52) at (-3.56, 2.00) {19};
			\node (C53) at (-3.04, 0.99) {20};
			\node (C54) at (-4.05, 0.48) {17};
			
			\draw (C11) -- (C12); \draw (C11) -- (C13); \draw (C11) -- (C14);
			\draw (C12) -- (C13); \draw (C12) -- (C14); \draw (C13) -- (C14);
			
			\draw (C21) -- (C22); \draw (C21) -- (C23); \draw (C21) -- (C24);
			\draw (C22) -- (C23); \draw (C22) -- (C24); \draw (C23) -- (C24);
			
			\draw (C31) -- (C32); \draw (C31) -- (C33); \draw (C31) -- (C34);
			\draw (C32) -- (C33); \draw (C32) -- (C34); \draw (C33) -- (C34);
			
			\draw (C41) -- (C42); \draw (C41) -- (C43); \draw (C41) -- (C44);
			\draw (C42) -- (C43); \draw (C42) -- (C44); \draw (C43) -- (C44);
			
			\draw (C51) -- (C52); \draw (C51) -- (C53); \draw (C51) -- (C54);
			\draw (C52) -- (C53); \draw (C52) -- (C54); \draw (C53) -- (C54);
			
			\draw[thick, dotted] (C12) -- (C24);
			\draw[thick, dotted] (C22) -- (C34);
			\draw[thick, dotted] (C32) -- (C44);
			\draw[thick, dotted] (C42) -- (C54);
			\draw[thick, dotted] (C52) -- (C14);
		\end{tikzpicture}
		\caption{Caveman graph with five caves of four nodes each.}
		\label{fig:caveman}
	\end{figure}
	
	Specifically,  \cite{Avellone2025} proves that the different behavior of NAP and modularity on Caveman graphs can be explained by their different dependence on network size. In particular, modularity is scale-dependent, whereas NAP is scale-invariant.
	
	Since equation \eqref{snap_definition} defines a family generalizing both NAP and modularity, it is meaningful to determine under which conditions on $\alpha$ and the network size the resolution limit may arise.
	
	
		
		
		
		
		
		
	
	\begin{proposition}
		\label{caveman_snap}
		Let $G=(V,E)$ be a connected Caveman graph with $q$ caves, each obtained from a clique of $p$ vertices by rewiring one internal edge to connect it to a neighboring cave. Assume $p\geq 3$. As $q\to+\infty$,  the merging threshold $T_{\alpha}^{\mathrm{bin}}$ converges to $\bigl(p(p-1)-2\bigr)
		\left(2^{1-\alpha}-1\right)$. In this asymptotic regime two adjacent caves are merged by the \SNAPEx{} criterion if and only if
		\begin{equation}
			\label{caveman_alpha_threshold}
			\alpha>\alpha^*=
			1-\log_2\!\left(
			\frac{p(p-1)-1}{p(p-1)-2}
			\right).
		\end{equation}
	\end{proposition}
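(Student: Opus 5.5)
The plan is to specialize Corollary~\ref{snap_merging_binary} to two adjacent caves $\mathcal C_1,\mathcal C_2$ of the Caveman graph. We compute every quantity in the threshold $T_{\alpha}^{\mathrm{bin}}$ explicitly as a function of $p$ and $q$, and then let $q\to+\infty$.

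\textbf{Edge counts.} Each cave starts from $K_p$, with $p(p-1)/2$ edges. One internal edge is removed. Because of the cyclic arrangement, the cave is joined by exactly one edge to each of its two neighbours. Hence
\[
m_i^{(r)}=\frac{p(p-1)}{2}-1,\qquad m_e^{(r)}=2,\qquad v_r=2m_i^{(r)}+m_e^{(r)}=p(p-1),\qquad r=1,2.
\]
For adjacent caves we have $m_e^{(12)}=1$; this uses $q\ge 3$, so that two adjacent caves share only one edge. The total edge count is $m=q\,p(p-1)/2$. The assumption $p\ge 3$ guarantees $m_i^{(r)}=p(p-1)/2-1\ge 2>0$, and $\mathcal C_1\cup\mathcal C_2$ is connected. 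Hence the hypotheses of Corollary~\ref{snap_merging_binary} hold.

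\textbf{The threshold and its limit.} Since $v_1=v_2=v$, we have $(v_1+v_2)/v_r=2$, so the first two terms of \eqref{snap_binary_threshold_def} combine to
\[
2m_i^{(1)}\bigl(2^{1-\alpha}-1\bigr)=\bigl(p(p-1)-2\bigr)\bigl(2^{1-\alpha}-1\bigr).
\]
For the third term we use the rewriting given after the Remark, which yields
\[
\frac{(2v)^2}{4m}\bigl[1-2\cdot 2^{-(\alpha+1)}\bigr]=\frac{v^2}{m}\bigl(1-2^{-\alpha}\bigr)=\frac{2p(p-1)}{q}\bigl(1-2^{-\alpha}\bigr).
\]
This term is nonnegative and of order $1/q$, so it tends to $0$. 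Therefore $T_{\alpha}^{\mathrm{bin}}\to(p(p-1)-2)(2^{1-\alpha}-1)$, uniformly in $\alpha\in[0,1]$.

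\textbf{The merging condition.} In the asymptotic regime the criterion $m_e^{(12)}>T_\alpha^{\mathrm{bin}}$ of Corollary~\ref{snap_merging_binary} becomes
\[
1>\bigl(p(p-1)-2\bigr)\bigl(2^{1-\alpha}-1\bigr),
\qquad\text{i.e.}\qquad
2^{1-\alpha}<\frac{p(p-1)-1}{p(p-1)-2}.
\]
Taking $\log_2$, which is increasing, gives $1-\alpha<\log_2\!\bigl(\tfrac{p(p-1)-1}{p(p-1)-2}\bigr)$, which is exactly $\alpha>\alpha^*$.

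\textbf{Main obstacle.} The delicate point is making "in this asymptotic regime" precise. For the direction $\alpha>\alpha^*$, the limit is strictly below $1$, so the threshold is below $1$ for all sufficiently large $q$, and merging occurs. For $\alpha<\alpha^*$, the limit exceeds $1$; since the vanishing term is nonnegative, the threshold exceeds $1$ for every $q$, and merging never occurs. At $\alpha=\alpha^*$, the limit equals $1$ and the positive $1/q$ correction keeps the threshold strictly above $1$, so no merge happens, consistent with the strict inequality in \eqref{caveman_alpha_threshold}.

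For completeness, $\alpha^*\in(0,1)$ when $p\ge 3$: the logarithm's argument lies in $(1,2]$, with the bound $2$ attained only at $p(p-1)=3$, which is impossible. This lets us also note that $\alpha=1$ (modularity) merges caves while $\alpha=0$ (NAP) does not.
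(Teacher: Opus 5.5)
Your proposal is correct and follows the paper's proof essentially step for step. You compute $m_i^{(r)}$, $m_e^{(r)}$, $v_r=p(p-1)$, $m=qp(p-1)/2$ and $m_e^{(12)}=1$, reduce $T_\alpha^{\mathrm{bin}}$ to $\bigl(p(p-1)-2\bigr)\bigl(2^{1-\alpha}-1\bigr)+\frac{2p(p-1)}{q}\bigl(1-2^{-\alpha}\bigr)$, and solve the limiting inequality $1>\bigl(p(p-1)-2\bigr)\bigl(2^{1-\alpha}-1\bigr)$ for $\alpha$; your extra treatment of the asymptotic step sharpens what the paper leaves informal, namely that the nonnegative $1/q$ correction rules out merging at every $q$ when $\alpha\le\alpha^*$ (including $\alpha=\alpha^*$), while merging holds for all large $q$ when $\alpha>\alpha^*$.
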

	\begin{proof}
		Let ${\mathcal C}_1$ and ${\mathcal C}_2$ be two adjacent caves. The number of internal edges in each cave is
		\begin{equation*}
			m_i^{(1)}=m_i^{(2)}
			=
			\binom{p}{2}-1
			=
			\frac{p(p-1)}{2}-1.
		\end{equation*}
		Moreover, each cave has two external edges, so that
		\begin{equation*}
			m_e^{(1)}=m_e^{(2)}=2.
		\end{equation*}
		Therefore, the volume of each cave is
		\begin{equation*}
			v_1=v_2
			=
			2m_i^{(1)}+m_e^{(1)}
			=
			p(p-1).
		\end{equation*}
		Since the graph contains $q$ caves and the rewiring operation preserves the total
		number of edges of the original disjoint union of cliques, we have
		\begin{equation*}
			m
			=
			q\binom{p}{2}
			=
			\frac{q\,p(p-1)}{2}.
		\end{equation*}
		Finally, two adjacent caves are connected by a single edge, hence $m_e^{(12)}=1$. Using the  
		threshold in Equation \ref{snap_binary_threshold}, since $v_1=v_2=p(p-1)$, we obtain
		\begin{equation*}
			T_{\alpha}^{\mathrm{bin}}({\mathcal C}_1,{\mathcal C}_2)
			=
			2\left[\binom{p}{2}-1\right]
			\left(2^{1-\alpha}-1\right) 
			+
			\frac{
				(2p(p-1))^{\alpha+1}
				-
				2(p(p-1))^{\alpha+1}
			}
			{
				4m\,(2p(p-1))^{\alpha-1}
			}.
		\end{equation*}
		Since \(m=q\,p(p-1)/2\), the second term can be simplified as
		\begin{equation}
			\label{caveman_threshold_exact}
			T_\alpha^{\mathrm{bin}}({\mathcal C}_1,{\mathcal C}_2)
			=
			\bigl(p(p-1)-2\bigr)
			\left(2^{1-\alpha}-1\right)
			+
			\frac{2p(p-1)}{q}
			\left(1-2^{-\alpha}\right).
		\end{equation}
		
		By Corollary \ref{snap_merging_binary}, merging the two adjacent caves improves the objective function if and only if
		$m_e^{(12)} > T_\alpha^{\mathrm{bin}}({\mathcal C}_1,{\mathcal C}_2)$.
		Since $m_e^{(12)}=1$, in the asymptotic regime $q\to+\infty$ this condition becomes
		\begin{equation*}
			1>
			\bigl(p(p-1)-2\bigr)
			\left(2^{1-\alpha}-1\right).
		\end{equation*}
		Equivalently,
		\begin{equation*}
			2^{1-\alpha}
			<
			1+\frac{1}{p(p-1)-2}
			=
			\frac{p(p-1)-1}{p(p-1)-2}.
		\end{equation*}
		from which we finally obtain 			
		\begin{equation*}
			\alpha>
			1-\log_2\!\left(
			\frac{p(p-1)-1}{p(p-1)-2}
			\right)\coloneq \alpha^{\star}.
		\end{equation*}
	\end{proof}

	It is worth noting that $\alpha^\star<1$ for every $p\geq 3$,
	since the argument of the logarithm in
	\eqref{caveman_alpha_threshold} is strictly greater than $1$. Thus, in the asymptotic regime $q \to +\infty$, the resolution-limit behavior is not restricted to the modularity endpoint $\alpha=1$, but may arise for $\alpha<1$. 
	
	Moreover, the value $\alpha^\star < 1$ 
	depends on the cave size $p$ and approaches $1$ as $p$ increases. 
	Therefore, since
	\begin{equation*}
		\alpha^\star=
		1-\frac{1}{\ln2}\ln\!\left(
		\frac{p(p-1)-1}{p(p-1)-2}
		\right)
		=
		1-\frac{1}{\ln2}\ln\!\left(
		1+\frac{1}{p(p-1)-2}
		\right).
	\end{equation*}
	
	we can conclude that
	
	\begin{equation*}
		1-\alpha^\star
		\sim \frac{1}{p^2\ln 2}, \qquad p \to +\infty.
	\end{equation*}
	From an operational perspective, this result shows that, when $\alpha^\star$
	is close to $1$, \SNAPEx{} provides values of $\alpha<\alpha^\star$
	that remain close to the modularity endpoint while preserving the individual caves in this class of graphs.
	
	\begin{remark}
		Consider a Caveman network with $24$ cliques of $5$ nodes each. In this case,
		\[
		m_i^{(1)}=m_i^{(2)}=\binom{5}{2}-1=9,
		\qquad
		m_e^{(1)}=m_e^{(2)}=2,
		\qquad
		m=24\binom{5}{2}=240.
		\]
		Hence, $v_1 = v_2 := 2m_i^{(1)}+m_e^{(1)}=20$, and the \SNAPEx{} merging threshold is
		\begin{equation*}
			T_\alpha({\mathcal C}_1,{\mathcal C}_2)
			=
			9\left[2^{1-\alpha}-1\right]
			+
			9\left[2^{1-\alpha}-1\right]
			+
			\frac{
				40^{\alpha+1}
				-
				20^{\alpha+1}
				-
				20^{\alpha+1}
			}
			{
				4\cdot 240\cdot 40^{\alpha-1}
			}.
		\end{equation*}
		Equivalently,
		\begin{equation*}
			T_\alpha({\mathcal C}_1,{\mathcal C}_2)
			=
			18\left(2^{1-\alpha}-1\right)
			+
			\frac{5}{3}\left(1-2^{-\alpha}\right).
		\end{equation*}
		Since two adjacent caves are connected by a single edge, they are merged if
		$T_\alpha({\mathcal C}_1,{\mathcal C}_2)<1$:
		\begin{equation*}
			\alpha>\log_2\!\left(\frac{103}{52}\right)
			\approx 0.986061.
		\end{equation*}
	\end{remark}
	
	\section{The Milano Algorithm}
	\label{sec:milano}
	The optimization problem defined in \eqref{snap_optimization} is combinatorial: its feasible region is the set $\wp(G)$ of all partitions of $V$ into connected clusters. 
	Moreover, the \SNAPEx{} family 
	contains modularity as the special case $\alpha=1$, so maximizing \SNAPEx{} at $\alpha=1$ is exactly modularity maximization, which is NP-hard~\cite{brandes2008}. The problem therefore admits an NP-hard special case, and no exact polynomial-time algorithm can be expected for any $\alpha$.
	Regarding the null-adjusted persistence (NAP), that is the case $\alpha = 0$, 
	two solution methods were investigated in \cite{Avellone2025}. The first consists in finding the true optimal partitions through the solution of a Mixed-Integer Linear Programming (MILP) model, the second consists in finding near-optimal partitions through a heuristic procedure.  
	In \cite{Avellone2025}, the computational times of the former approach are prohibitive, while the near-optimal heuristic partitions are sufficiently accurate to retrieve  
	meaningful communities within short computational times.  
	Building on these computational results, in this work we extend the
	heuristic approach from NAP to the whole \SNAPEx{} family.
	Our heuristic for \SNAPEx{} community detection is called the \milanoEx algorithm and it is a straightforward extension of the Louvain procedure~\cite{blondel2008fast}. For the sake of completeness, we report in Appendix~\ref{app:ilp} a MILP formulation for \SNAPEx.
	
	The main procedure is listed in Algorithm \ref{alg:milano}.
	Line~\ref{alg:milano:init1} stands for the initialization of the data structures needed for the loop from Lines~\ref{alg:milano:loopstart} to~\ref{alg:milano:loopend}. 
	Line~\ref{alg:milano:init2} consists of assuming an initial partition $\Pi$ composed of singletons with the two corresponding (trivial) data structures: the total internal edge weight $w_{\mathrm{in}}(\mathcal{C}_k)$ and the volume $\operatorname{vol}({\mathcal{C}_k})$
	for every community $\mathcal{C}_k \in \Pi$.
	The partition is tentatively improved by the procedure \textsc{MoveNode} (described by Algorithm \ref{alg:movennode}) in Line~\ref{alg:milano:MoveNode}. If successful, then the input graph $G$ can be contracted into a smaller graph.  
	This  
	new graph $G'=(V',E')$ is characterized by nodes $V'$ representing the communities received as output in Line~\ref{alg:milano:MoveNode}.  
	More precisely, each community $ \mathcal{C}_k \in \Pi$ is represented by a super-node $u_k \in V'$. As a consequence,  
	the data structure representing 
	super-nodes $u_k$ and edges $(u_k, u_q)$ are defined as follows:
	\begin{itemize}
		\item $w(u_k) = w_{\mathrm{in}}(\mathcal{C}_k)$, i.e.\ the self-loop of $u_k$ carries the internal weight of $\mathcal{C}_k$; 
		\item $w_{kq} = w(u_k, u_q) = \sum_{i \in \mathcal{C}_k, j \in \mathcal{C}_q} w_{ij}$, that is, the weight $w_{kq}$ of the edge between nodes $u_k,u_q \in V'$ is the sum of the edge weights between clusters $\mathcal{C}_k$ and $\mathcal{C}_q$ in $G$.  
	\end{itemize}
	After these operations, $G'$ is itself a graph to which we apply the subroutine \textsc{MoveNode} once more, trying to move or aggregate super-nodes $u_k$ rather than the original nodes $u$. The procedure is repeated over successive contractions until no further improvement is found by \textsc{MoveNode}.
	Finally, in Line~\ref{alg:milano:verify} community connectivity is checked, as it has been documented that Louvain-style local search may end with disconnected communities.
	Note that Algorithm \ref{alg:milano} requires three more parameters beyond the graph $G$: they are the maximum number of sweeps $L_{\max}$ of the \textsc{MoveNode} subroutine (a \emph{sweep} is one full pass over all nodes), the threshold $\tau$ that is necessary to avoid numerical instability due to floating-point arithmetic (see Line~\ref{alg:movennode:tau} of \textsc{MoveNode}), and, finally, the random number generator $\rho$. 
	
	The subroutine \textsc{MoveNode} tries to improve the objective function by iteratively reallocating a node $u$ to some cluster $\mathcal{C}^*$. The reallocation is organized in Lines~\ref{alg:movennode:whilestart}-\ref{alg:movennode:whileend}. Firstly, nodes are shuffled in Line~\ref{alg:movennode:shuffled}  yielding a random ordering $\sigma$ of $V$ (using $\rho$). Then, following this order, beginning with the first, 
	each node is tentatively moved from its current cluster $\mathcal{C}_{\mathrm{old}}$ to the community $\mathcal{C}^*$ that yields the largest gain. 
	In Line~\ref{alg:movennode:best}, the best advantageous move for $u$ is stored and then $u$ is finally reallocated in Lines~\ref{alg:movennode:reallocateds}-\ref{alg:movennode:reallocatede}. The process is repeated for all $u$, see Lines~\ref{alg:movennode:fornodestart}-\ref{alg:movennode:fornodeend}. After that, nodes are reshuffled and the process is repeated in a new sweep. When, in a sweep, no reallocation is advantageous, counter $\mathit{nb\_moves}$ remains 0 from its initialization in Line~\ref{alg:movennode:nbmoves:init}, as Line~\ref{alg:movennode:nbmoves:inc} is not reached. Therefore, the condition on the While-loop beginning in Line~\ref{alg:movennode:whilestart} is not fulfilled and the loop is exited. A final reconstruction phase then rebuilds the partition $\Pi$ from the community assignments and removes the empty communities.
	
	Note that, as \textsc{MoveNode} is randomised through the node shuffling $\sigma$, the whole procedure lends itself to a multi-start scheme: several independent runs with different seeds can be performed and the partition with the highest \SNAPEx{} score is returned.

	\begin{algorithm}[t]
		\caption{\textsc{MilanoCommunities}}
		\label{alg:milano}
		\begin{algorithmic}[1]
			\Require Graph $G$, threshold $\tau$,
			max sweeps $L_{\max}$, RNG $\rho$
			\Ensure  Final community partition $\Pi$
			
			\State $\SNAPEx.\textsc{InitGraph}(G)$\;\label{alg:milano:init1}
			\State $\Pi \gets$ singleton partition;\;\label{alg:milano:init2}
			$G_{\mathrm{curr}} \gets G$;
			\Repeat\label{alg:milano:loopstart}
			\State $(\Pi,\, \mathit{impr})
			\gets \textsc{MoveNode}(G_{\mathrm{curr}},\, \Pi,\,
			\tau,\, L_{\max},\, \rho)$\label{alg:milano:MoveNode}
			\If{\textbf{not} $\mathit{impr}$} \textbf{break} \EndIf
			\State build contracted graph $G'$: each $\mathcal{C} \in \Pi$
			becomes a super-node; intra-community edges become self-loops;
			inter-community edges are aggregated by weight\label{alg:milano:buildgraph}
			\State $G_{\mathrm{curr}} \gets G'$;\; 
			$\Pi \gets$ singleton partition over $V(G')$\label{alg:milano:newinput}
			\Until{no improvement}\label{alg:milano:loopend}
			\State expand the super-nodes of $\Pi$ back to the original vertices $V$
			\State Verify the connectivity of $\Pi$.\label{alg:milano:verify}

			\State \Return $\Pi$
		\end{algorithmic}
	\end{algorithm}
	
	\begin{algorithm}[t]
		\caption{\textsc{MoveNode}}
		\label{alg:movennode}
		\begin{algorithmic}[1]
			\Require Graph $G$, initial partition $\Pi$, threshold $\tau$, max sweeps $L_{\max}$,
			RNG $\rho$
			\Ensure  Updated partition $\Pi$, flag $\mathit{improvement}$
			
			\State $\SNAPEx.\textsc{StartLevel}(\Pi)$
			\Comment{\textit{--- Phase A: local optimization ---}}
			\State $\mathit{improvement} \gets \mathbf{false}$;\;
			$\mathit{nb\_moves} \gets 1$;\; $s \gets 0$
			\While{$\mathit{nb\_moves} > 0$ \textbf{ and } ($L_{\max} = 0$ \textbf{ or }
				$s < L_{\max}$)}\label{alg:movennode:whilestart}
			\State $s \gets s+1$;\; $\mathit{nb\_moves} \gets 0$;\;
			shuffle $V$ into order $\sigma$ using $\rho$\label{alg:movennode:nbmoves:init}\label{alg:movennode:shuffled}
			\ForAll{$u \in \sigma$}\label{alg:movennode:fornodestart}
			\State $\mathcal{C}_{\mathrm{old}} \gets$ community of $u$;\;
			$\mathcal{C}^* \gets \mathcal{C}_{\mathrm{old}}$;\; $\Delta^* \gets 0$
			\ForAll{community $\mathcal{C} \ne \mathcal{C}_{\mathrm{old}}$ adjacent to $u$}\label{alg:movennode:forcomstart}
			\State $\Delta \gets \Delta \SNAPEx(u,\; \mathcal{C}_{\mathrm{old}} \to \mathcal{C})$
			\If{$\Delta - \Delta^* > \tau$}\; $\mathcal{C}^* \gets \mathcal{C}$;\;\label{alg:movennode:tau}\label{alg:movennode:best}
			$\Delta^* \gets \Delta$ \EndIf
			\EndFor\label{alg:movennode:forcomend}
			\If{$\mathcal{C}^* \ne \mathcal{C}_{\mathrm{old}}$}\label{alg:movennode:reallocateds}
			\State move $u$ to $\mathcal{C}^*$;\; $\SNAPEx.\textsc{Update}(u, \mathcal{C}_{\mathrm{old}}, \mathcal{C}^*)$
			\State $\mathit{improvement} \gets \mathbf{true}$;\;
			$\mathit{nb\_moves} \gets \mathit{nb\_moves}+1$\label{alg:movennode:nbmoves:inc}
			\EndIf\label{alg:movennode:reallocatede}
			\EndFor\label{alg:movennode:fornodeend}
			\EndWhile\label{alg:movennode:whileend}
			\Comment{\textit{--- Phase B: partition reconstruction ---}}
			\If{$\mathit{improvement}$}
			\State rebuild $\Pi$  from the final community assignments and remove all the empty communities\;
			\EndIf
			\State \Return $(\Pi,\;
			\mathit{improvement})$
		\end{algorithmic}
	\end{algorithm}

	\section{Community detection on simulated and real-world networks}
	\label{sec:benchmark}
	
	\subsection{Baseline algorithms and experimental measures}
	\label{subsec:baselines}
	We evaluate the performance, scalability, and robustness of the proposed \milanoEx{} algorithm, implemented in C++ and released as the R and Python packages \texttt{scalednap},
	against other community detection methods with different methodological approaches, all available as Python packages: 
	\begin{itemize}
		\item \textbf{Modularity optimization:} We considered the \textit{Louvain} algorithm~\cite{blondel2008fast}, the \textit{Leiden} algorithm~\cite{traag2019louvain}, and the Constant Potts Model (CPM)~\cite{traag2011narrow}. The first two maximize modularity, whose null model is the configuration model (Leiden being a computational improvement of the Louvain heuristic), while CPM compares each community's internal density against a constant threshold. Modularity with the configuration model is affected by the resolution limit, whereas CPM is not. All three objectives expose a resolution parameter, which we sweep over its range: for Louvain and Leiden it is the parameter $\gamma$ of the modularity-with-resolution (\texttt{RBConfiguration}) objective of Reichardt and Bornholdt~\cite{Reichardt2006}, while for CPM it is an internal edge-density threshold on its own scale.
		\item \textbf{Information theory:} The algorithm \textit{Infomap}~\cite{rosvall2008maps} uses information theory to define the objective function, i.e. the map equation, used to evaluate clusterings. While modularity can find communities in which internal arcs are more dense than expected, the map equation identifies communities in which a random walker tends to remain for relatively long periods. 
		\item \textbf{Random-walk distances:} The \textit{Walktrap} algorithm~\cite{pons2005computing} computes structural distances between nodes by simulating random walkers over 
		the networks nodes. Communities are aggregated hierarchically on this random-walk distance with a Ward-like criterion — each step merges the two adjacent communities that least increase the total within-community variance of the squared distances — and the dendrogram is cut at the level of maximum modularity. In the \textsc{igraph} implementation the transition-probability matrix is computed analytically, so Walktrap is deterministic; unlike 
		other optimization-based methods, it does not detect communities by optimizing an explicit objective function.  Modularity is instead used only to select the final cut of the dendrogram.
		\item \textbf{Algebraic flow simulation:} The \textit{Markov Clustering Algorithm (MCL)}~\cite{dongen2000graph} identifies cluster structures by simulating stochastic flows via alternate matrix expansion and inflation operations. 
		MCL has been proposed for dense, clique-like topologies and real-world biological networks. 
		The algorithm controls the community size with the inflation parameter. As for the resolution, we tested various alternative values.
		\item \textbf{Local consensus:} The \textit{Label Propagation Algorithm (LPA)}~\cite{raghavan2007near} is a constructive method that finds communities through an iterative local majority-voting process. Given its near-linear time complexity, 
		LPA serves as an essential baseline for bounding computational efficiency and 
		assessing robustness under high topological noise.
		
	\end{itemize}
	
	These Python packages are used:
	Leiden and its CPM variant via \texttt{leidenalg}~\cite{traag2019louvain}, Louvain via \textit{NetworkX}~\cite{hagberg2008exploring}, Walktrap and LPA via the native \textit{python-igraph} routines~\cite{csardi2006igraph}, Infomap via the official \textit{Infomap} package~\cite{rosvall2008maps}, and MCL via the \texttt{markov\_clustering} library. Methods that expose a resolution parameter (\milanoEx{}, Louvain, Leiden, CPM, MCL) are swept over that parameter and compared at equal number of clusters; Infomap, Walktrap and LPA, which yield a single partition, are reported at the cluster count they naturally produce.
	
	We test all algorithms on two classes of networks. The first consists of \emph{synthetic} graphs, both weighted and unweighted, generated by the LFR benchmark of Lancichinetti, Fortunato and Radicchi~\cite{2008LFR}. The ground-truth community structure is established by construction and some control parameters, such as the mixing parameter, the degree distribution and the distribution of community sizes, make clustering more or less difficult. Since the ground-truth partition is known, the recovery achieved by each method can be quantified and compared across algorithms. The second class consists of some \emph{real-world} networks from the \SNAPcoll  collection~\cite{YangLeskovec2015}, selected from the instances in which the ground-truth communities are at least partially known.
	
	The quality of the partition returned by each method is assessed with two external measures, depending on the nature of each benchmark. On the LFR networks, whose planted partition is a complete, non-overlapping ground truth, we report the Adjusted Mutual Information (AMI)\footnote{Since the commonly used Normalized Mutual Information (NMI)  
		is not corrected for chance and therefore it is systematically inflated on fine-grained partitions, an effect that grows with the number of clusters~\cite{vinh2010information}, we relied on the AMI for our comparison. 
		The magnitude of the bias is reported in  Appendix~{}\ref{app:measures}.}
	\cite{vinh2010information}, computed with the \texttt{sklearn.metrics} module of \textit{scikit-learn}~\cite{scikit-learn}, while on \SNAPcoll networks, whose ground-truth communities are partial and overlapping,
	we report the average best-matching $F_1$ score of Yang and Leskovec~\cite{YangLeskovec2013}, implemented in our codes.

	All computational experiments were performed on a workstation equipped with an Intel Core i9-14900K processor (24 cores, up to 6.0~GHz), 128~GB of DDR5 RAM (5600~MHz), and a 2~TB NVMe PCIe~4.0 SSD (Samsung 990~Pro), running Ubuntu 24.04.2 LTS (Linux kernel 6.17). All methods are CPU-only \footnote{We state this explicitly because several recent community-detection implementations do offer GPU-accelerated back-ends (for instance through \textsc{cuGraph}), so that timings obtained with and without such acceleration would not be comparable: none of the runs reported here uses the GPUs installed on the machine.}. Every single run is isolated in a dedicated process; on the real-world networks (Section~\ref{sec:real-world}) a wall-clock limit of one hour ($3600$~s) per run is set, a run exceeding this limit is reported as a \emph{timeout} (TO) and a run exhausting the available memory as \emph{out-of-memory} (OOM). 
	
	\subsection{Stability of the resolution parameter}
	\label{sec:alpha-coherence}
	
	The density exponent $\alpha$ in \SNAPEx{} acts as a \emph{resolution knob}: its largest values reward high-strength communities and thus favour coarser partitions, while its smallest values yield 
	finer structures. So, on a graph without ground truth, one can sweep $\alpha$ to generate a family of partitions at 
	different levels of granularity and inspect which aggregation level is the most meaningful for the domain---much as one varies the resolution $\gamma$ in modularity-based methods. The natural question is whether this sweep is \emph{reliable}: does a high-recovery region 
	emerge along $\alpha\in[0,1]$, and is its location stable across graphs of different difficulty and density? 
	We investigate these questions below on LFR benchmarks where the planted partition is known, and recovery can be measured.
	
	\paragraph{Unweighted networks}
	
	We generate LFR networks~\cite{2008LFR} with $N=10\,000$ nodes, community sizes in $[50,1000]$, degree exponent $\tau_1=2.0$ and community-size exponent $\tau_2=1.5$, the last two being the values recommended in~\cite{2008LFR}. Two parameters are varied. The first is the \emph{mixing parameter} $\mu$, the fraction of each node's edges that point outside its own community: we let $\mu\in\{0.3,0.4,0.5,0.6,0.7\}$, and the larger $\mu$, the harder it is to recover the planted partition. The second is the \emph{average degree} $\bar{k}\in\{10,25,50\}$ (with maximum degree $10\,\bar{k}$, i.e.\ $100$, $250$ and $500$ respectively), which controls density: the higher the density, the easier is the recovery, because each node then has a greater absolute number of edges pointing inside its own community, even at fixed $\mu$. All the remaining parameters are fixed. For each of the $5\times3=15$ parameter configurations, $20$ independent connected graphs are instantiated. For every graph, the \milanoEx algorithm is run, sweeping $\alpha$ from $0$ to $1$ in steps of $0.05$ ($21$ values), using $10$ independent multistart restarts per value and retaining the partition with the highest objective function value. Then, the reported AMI is the average over the $20$ instances of each parameter configuration.
	
	For each parameter configuration $(\mu,\bar{k})$, we define $\alpha^*$ as the value of $\alpha$ that maximises the average AMI over the $20$ instances. Table~\ref{tab:alpha-star} reports $\alpha^*$ together with the corresponding AMI$(\alpha^*)$, and Figure~\ref{fig:alpha-coherence} shows the same data as a heatmap (panel (a)) and as AMI$(\alpha)$ curves, one panel per average degree (panels (b)-(d)).

	\begin{table}[t]
		\centering
		\begin{tabular}{c ccc}
			\toprule
			& \multicolumn{3}{c}{average degree $\bar{k}$}\\
			\cmidrule(lr){2-4}
			$\mu$ & $10$ & $25$ & $50$\\
			\midrule
			$0.3$ & $0.95$ \,(0.992) & $0.90$ \,(1.000) & $0.85$ \,(1.000)\\
			$0.4$ & $0.95$ \,(0.886) & $0.95$ \,(1.000) & $0.85$ \,(1.000)\\
			$0.5$ & $0.95$ \,(0.706) & $0.95$ \,(0.999) & $0.85$ \,(1.000)\\
			$0.6$ & $0.90$ \,(0.378) & $0.95$ \,(0.976) & $0.85$ \,(1.000)\\
			$0.7$ & $0.85$ \,(0.144) & $0.90$ \,(0.745) & $0.85$ \,(0.991)\\
			\bottomrule
		\end{tabular}
		\caption{Computational results on unweighted graphs: optimal $\alpha^*$ for each $(\mu,\bar{k})$ configuration, with the corresponding AMI in parentheses.}
		\label{tab:alpha-star}
	\end{table}
	
	\begin{figure}[t]
		\centering
		\begin{subfigure}[b]{0.48\textwidth}
			\includegraphics[width=\textwidth]{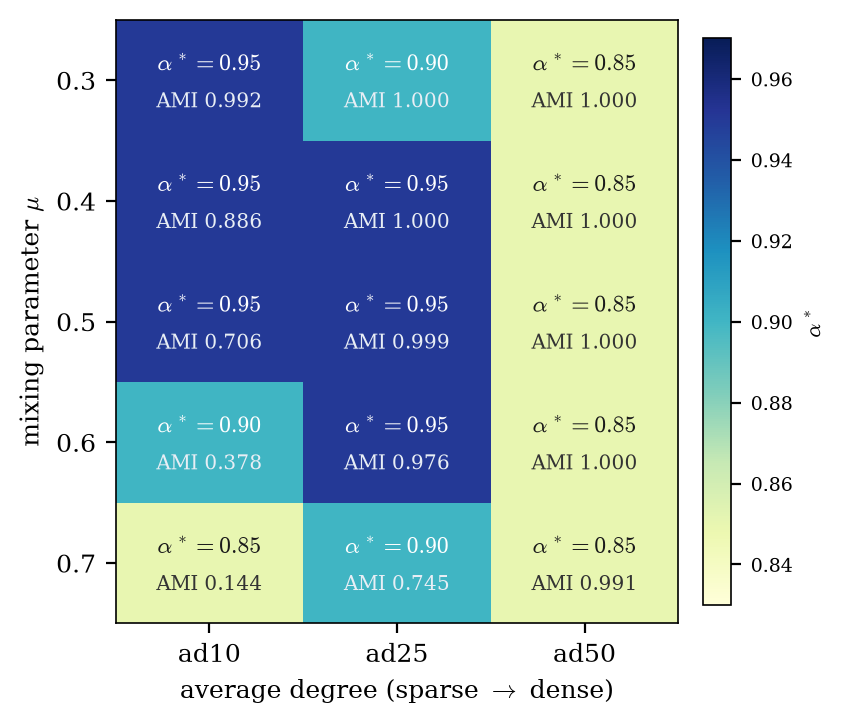}
			\caption{}
			\label{fig:alpha-heatmap}
		\end{subfigure}
		\hfill
		\begin{subfigure}[b]{0.48\textwidth}
			\includegraphics[width=\textwidth]{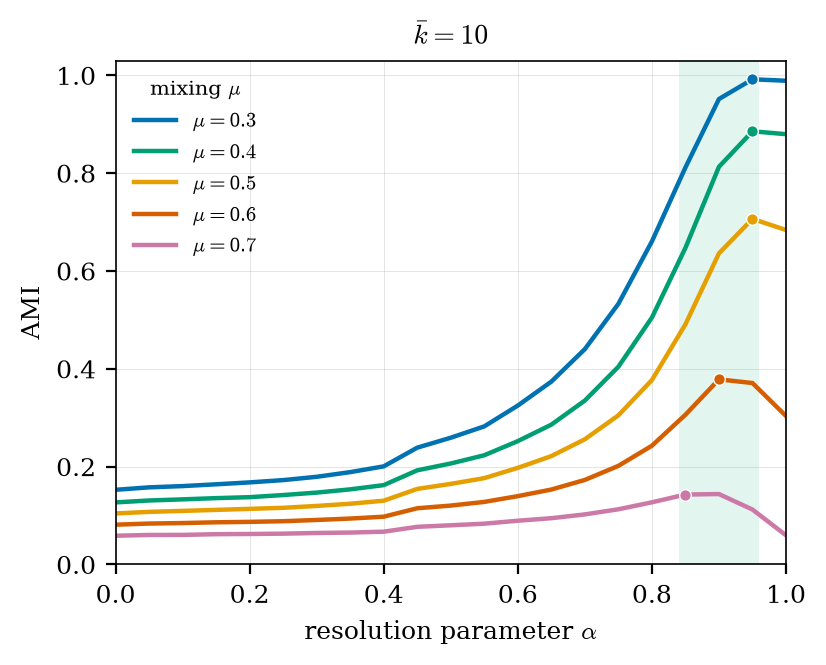}
			\caption{}
			\label{fig:alpha-curves-d10}
		\end{subfigure}
		
		\vspace{0.6em}
		\begin{subfigure}[b]{0.48\textwidth}
			\includegraphics[width=\textwidth]{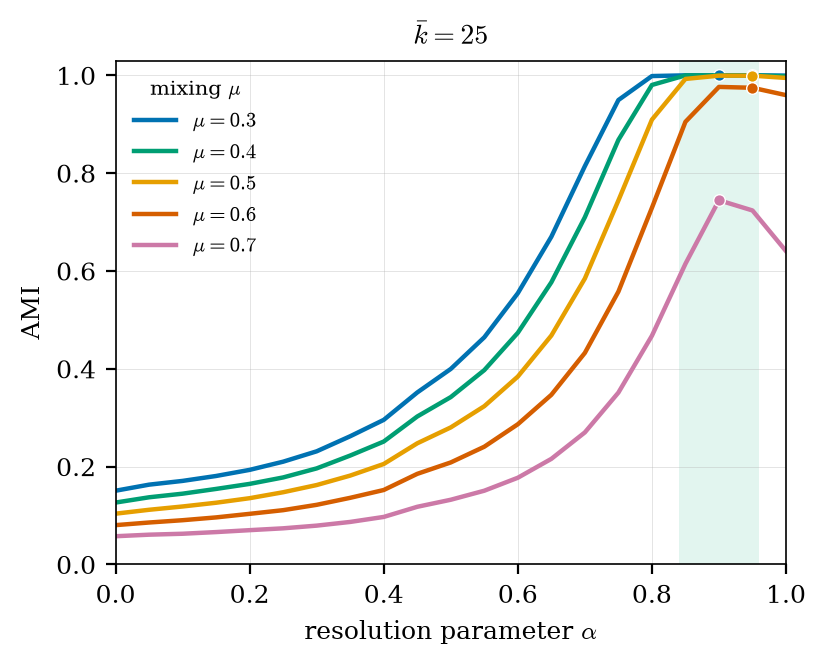}
			\caption{}
			\label{fig:alpha-curves-d25}
		\end{subfigure}
		\hfill
		\begin{subfigure}[b]{0.48\textwidth}
			\includegraphics[width=\textwidth]{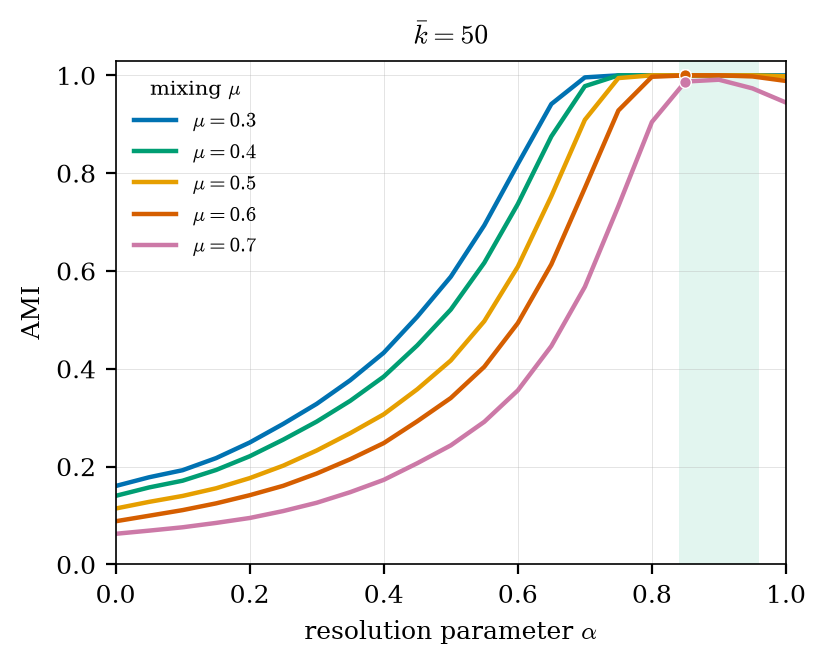}
			\caption{}
			\label{fig:alpha-curves-d50}
		\end{subfigure}
		\caption{Stability of the optimal $\alpha^*$ across the $(\mu,\bar{k})$ grid. \textbf{(a)}~$\alpha^*$ shown as a heatmap. \textbf{(b--d)}~AMI$(\alpha)$ curves at fixed average degree $\bar{k}=10,25,50$ respectively, the shaded band is the optimal $\alpha^*$ range $[0.85,0.95]$.}
		\label{fig:alpha-coherence}
	\end{figure}
	
	Three observations follow. First, as can be seen in Figure~\ref{fig:alpha-coherence}, panel (a), the value of $\alpha^*$ is remarkably \emph{stable}: across the fifteen configurations it lies in the interval $[0.85,0.95]$. 
	
	Second, the densest setting $\bar{k}=50$ pins $\alpha^*$ to the lower end of the band ($0.85$) in every row, without exception, whereas the sparser settings sit higher, within $[0.85,0.95]$. This suggests a possible interpretation: dense communities already have high internal strength, so only a small amount of volume scaling is enough to aggregate them. Conversely, sparser communities require slightly more volume scaling. 
	
	Third, as can be seen in panels (b)-(d), increasing $\mu$ lowers the \emph{height} of the maximum AMI but essentially not its \emph{location}. Actually, when the graph is dense, that is $\bar{k} = 50$, communities are recognized even when the mixing parameter $\mu$ is high, as can be seen from the AMI values close to 1. When the density is low, that is $\bar{k} = 10$, then the mixing parameter $\mu$ has an impact on the ability to recover the true communities, as the AMI decreases to $0.14$. However, as pointed out above,  the value $\alpha^*$ remains within the narrow interval $[0.85,0.95]$
	both in the cases in which communities are recognized and the cases in which they are not. 
	
	\paragraph{Weighted networks}
	We now investigate whether the stability observed for $\alpha^*$ carries
	over to weighted networks. We simulate LFR networks using the same basic
	parameter settings as in the unweighted case, but fixing  the average degree at $\bar{k} = 25$, maximum degree at $250$, with $N=10\,000$, $\tau_1=2.0$, $\tau_2=1.5$ and community sizes in $[50,1000]$. As before, the presence of edges between nodes of different communities is controlled by the
	parameter $\mu$, while their weights are controlled by the new parameter $\mu_w$. 
	As suggested in~\cite{2009LFRweighted},
	each node $i$ is given a target strength $s_i = k_i^{\beta}$ with weight exponent
	$\beta=1.5$, where $k_i$ is the degree of node $i$ drawn from the LFR power-law degree distribution. 
	A fraction $1-\mu_w$ of $s_i$ is assigned to edges within the same community,
	a fraction $\mu_w$ to external edges. The combination of parameters $\mu$ and $\mu_w$ makes it more or less difficult to recognize the true network communities. To distinguish the difficult from the easy instances, consider that, for each node, a fraction $1-\mu_w$ of its strength is assigned to its internal edges, which are a fraction $1-\mu$ of its overall edges. Therefore, the expected weight of an internal edge is proportional to $(1-\mu_w)/(1-\mu)$, while the expected weight of an external edge is proportional to $\mu_w/\mu$. The easiest instances are those in which weights are informative, that is, the instances in which $(1-\mu_w)/(1-\mu) > \mu_w/\mu$, a condition that is satisfied if and only if $\mu>\mu_w$. 
	
	We consider $\mu \in \{0.5, 0.6, 0.7, 0.8\}$, then apply 
	the weight mixing values $\mu_w \in \{0.1, 0.3, 0.5, 0.7\}$ to the same unweighted underlying graphs. We run the \milanoEx{}
	algorithm, controlling for $\alpha$ in $[0,1]$ in steps of $0.05$, with $10$ multistart restarts per instance.
	Recovery is measured by the AMI against the planted partition, averaged over the $20$ instances with the same parameter setting.
	
	\begin{table}[t]
		\centering
		\begin{tabular}{c c cccc}
			\toprule
			& unweighted & \multicolumn{4}{c}{weighted, by $\mu_w$}\\
			\cmidrule(lr){2-2}\cmidrule(lr){3-6}
			$\mu$ & (reference) & $0.1$ & $0.3$ & $0.5$ & $0.7$\\
			\midrule
			$0.5$ & $0.999$ \,(0.95) & $\mathbf{1.000}$ \,(1.00) & $\mathbf{1.000}$ \,(0.95) & $0.985$ \,(0.95) & $0.081$ \,(1.00) \\
			$0.6$ & $0.974$ \,(0.90) & $\mathbf{1.000}$ \,(1.00) & $\mathbf{1.000}$ \,(1.00) & $0.994$ \,(0.95) & $0.315$ \,(1.00) \\
			$0.7$ & $0.739$ \,(0.90) & $\mathbf{1.000}$ \,(1.00) & $0.999$ \,(1.00) & $0.986$ \,(0.95) & $0.526$ \,(1.00) \\
			$0.8$ & $0.172$ \,(0.85) & $\mathbf{0.971}$ \,(1.00) & $0.953$ \,(1.00) & $0.925$ \,(0.95) & $0.620$ \,(0.90) \\
			\bottomrule
		\end{tabular}
		\caption{
			Computational results on weighted graphs: optimal $\alpha^*$ for each $(\mu,\mu_w)$ configuration, with the corresponding AMI in parentheses.}
		\label{tab:weighted}
		
	\end{table}
	
	\begin{figure}[t]
		\centering
		\begin{subfigure}[b]{0.48\textwidth}
			\includegraphics[width=\textwidth]{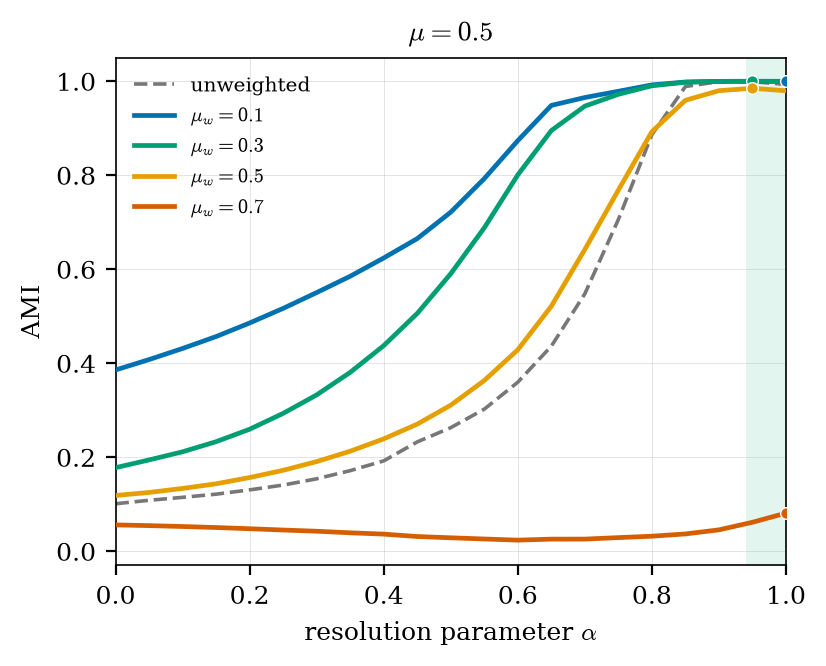}
			\caption{}
			\label{fig:weighted-curves-mu050}
		\end{subfigure}
		\hfill
		\begin{subfigure}[b]{0.48\textwidth}
			\includegraphics[width=\textwidth]{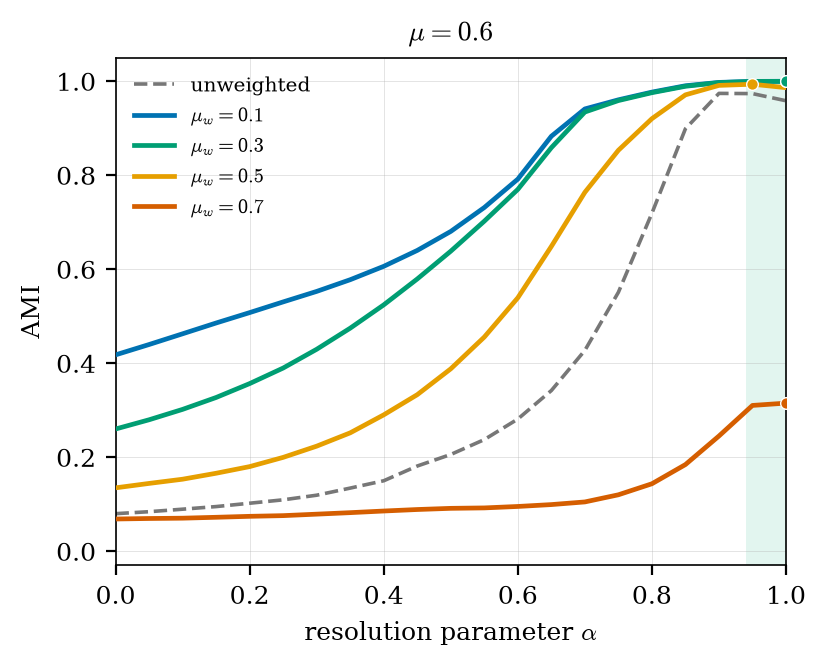}
			\caption{}
			\label{fig:weighted-curves-mu060}
		\end{subfigure}
		
		\vspace{0.6em}
		\begin{subfigure}[b]{0.48\textwidth}
			\includegraphics[width=\textwidth]{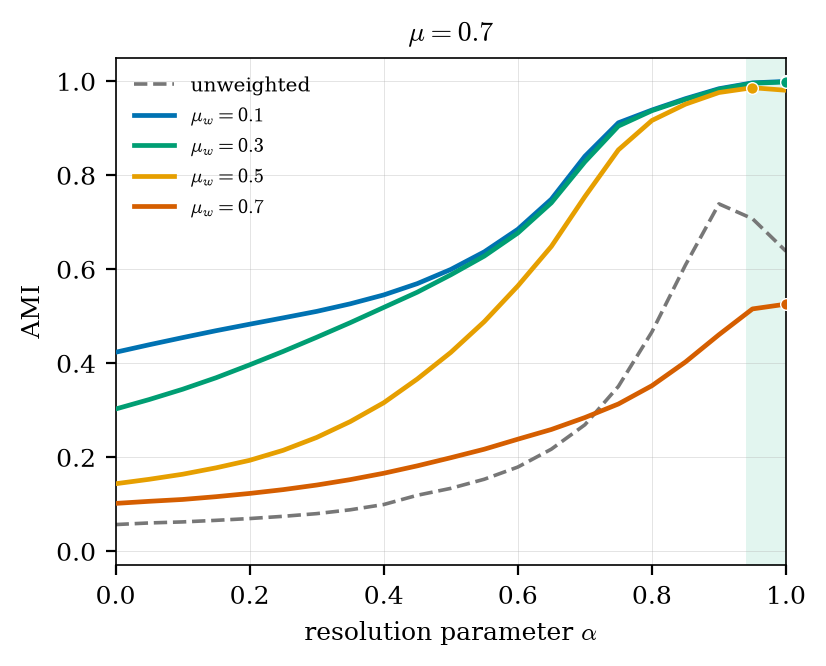}
			\caption{}
			\label{fig:weighted-curves-mu070}
		\end{subfigure}
		\hfill
		\begin{subfigure}[b]{0.48\textwidth}
			\includegraphics[width=\textwidth]{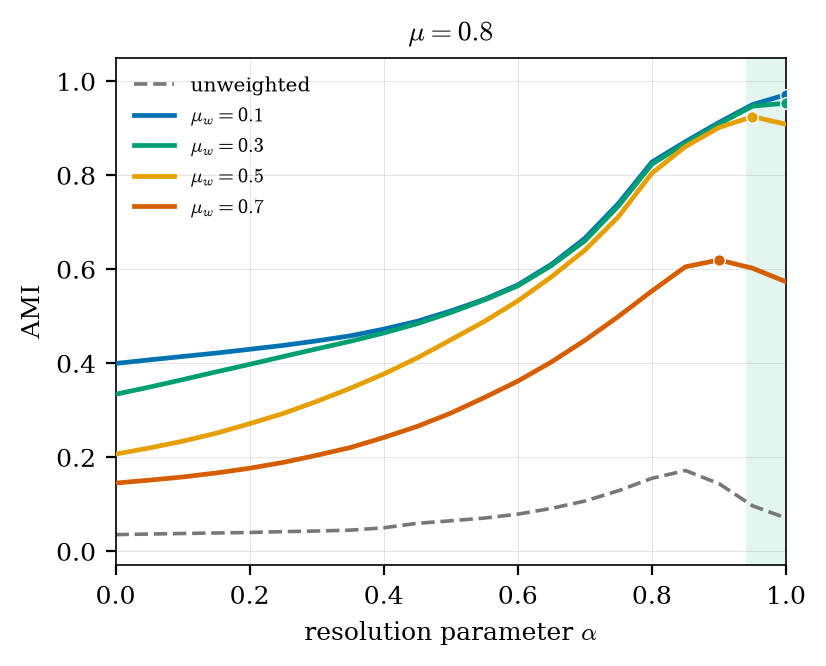}
			\caption{}
			\label{fig:weighted-curves-mu080}
		\end{subfigure}
		\caption{
			Stability of the optimal $\alpha^*$ across the $(\mu,\mu_w)$ grid. The shaded band is the optimal $\alpha^*$ range $[0.95,1.00]$.}
		\label{fig:weighted-curves}
	\end{figure}
	
	Table~\ref{tab:weighted} reports the AMI as a function of $\mu$ and $\mu_w$,
	together with the unweighted reference, while Figure~\ref{fig:weighted-curves} shows the corresponding AMI$(\alpha)$ curves, one panel for each value of $\mu$. When  $\mu > \mu_w$, weights generally improve the recovery of the planted
	communities relative to the unweighted reference. Consider, for example, the column corresponding to the weighted graph with $\mu_w=0.5$ and compare it to the column corresponding to the unweighted graph. There, the values of the AMI of the weighted graphs are better than those of the unweighted. Similar results are available in all cells of the table.
	Moreover, each panel of Figure~\ref{fig:weighted-curves} fixes one topological difficulty $\mu$ and overlays the four weight regimes on the unweighted reference: for $\mu_w<\mu$ the weighted curves lie above the reference and their peaks fall in the band $\alpha^*\in[0.95,1.00]$, slightly above the unweighted $\alpha^*$ ($0.85$--$0.95$), whereas the anti-informative case $\mu_w=0.7$ stays well below and rises above the reference only at $\mu=0.8$ (panel~d), the single panel where $\mu>\mu_w$.
	Overall, the weighted experiments confirm that the high-recovery region remains concentrated in a narrow range of $\alpha$.

	\subsection{Comparison with other community detection methods}
	\label{sec:comparison}
	
	In this section, we compare \SNAPEx{} with other community detection methods. Some of them, including \SNAPEx{}, let researchers tune a resolution parameter to obtain partitions of different cardinality. A fair comparison should control, as far as possible, for the number of detected communities. Therefore, for methods using a resolution parameter, we sweep the parameter in their feasible range, and retain partitions whose cardinalities are the closest to the true partition size $K$; parameter-free methods are reported at their natural cluster count.
	
	First, we compare \SNAPEx{} optimization against Louvain~\cite{blondel2008fast} and Leiden~\cite{traag2019louvain} under the modularity-with-resolution formulation (\texttt{RBConfiguration}) of Reichardt and Bornholdt~\cite{Reichardt2006}, and the Constant Potts Model~\cite{traag2011narrow}. All these models belong to the modularity-based family of quality-function optimization methods, but with a fundamental difference. Louvain and Leiden use the configuration model to compare the density of the internal edges, while CPM uses a fixed threshold. All models use a resolution parameter to obtain partitions of different size: so, in modularity maximization we swept $\gamma\in[0.2,2.0]$ in steps of $0.1$, in CPM we swept its density threshold over $[0.01,0.20]$ in steps of $0.01$, while in \SNAPEx{} we swept $\alpha\in[0,1]$ in steps of $0.05$; for every sweep, each method is optimized over $10$ restarts, and the reported AMI is averaged over the $20$ independent graphs generated for each configuration.
	
	Then we compare \SNAPEx{} against Markov Clustering (MCL), Infomap, Walktrap, and Label propagation (LPA). Among these methods, only Infomap is based on the optimization of an objective function (the Map equation), while the other three are constructive, in the sense that communities are generated through their respective algorithmic mechanisms rather than by optimizing an explicit global objective function. Moreover, only MCL provides a resolution parameter to output partitions of different sizes, i.e. the \textit{inflation parameter}, swept over $[1.1,2.5]$ in steps of $0.1$. The same setting is used for both unweighted and weighted networks, while the other methods are parameter-free and output a single partition.
	
	The comparison is performed for both unweighted and weighted benchmark networks.
	
	\paragraph{Unweighted networks} Tests are run on unweighted networks, simulated with the LFR procedure~\cite{2008LFR}. Parameters are fixed to $N=10\,000$, node degree exponent  $\tau_1=2.0$, average degree $\bar{k} = 25$ and maximum degree $80$. Next, we simulated three scenarios for the community-size distribution, varying the community-size exponent $\tau_2$ and the size range. The \textit{homogeneous} scenario is characterized by $\tau_2=2.5$ and size range  $[50,100]$, the \textit{intermediate} scenario by $\tau_2=2.0$ and size range $[30,300]$, the \textit{heterogeneous} scenario by $\tau_2=1.5$ and size range $[20,1000]$. For each scenario, the ratio $\mu$ of edges pointing outside a community is chosen in the set $\{0.4,0.5,0.55,0.65,0.75\}$. So we have $5\times3 = 15$ scenario/parameter configurations.
	Results about the AMI are reported in Table~\ref{tab:vs-multibaseline}. Considering first the four resolution-based methods in the homogeneous scenario, all four algorithms recover the planted partition up to $\mu=0.65$. A minor difference is that Leiden edges out Louvain, as expected since it is a more refined modularity heuristic.  
	As heterogeneity increases, differences among the four methods become more pronounced, and \SNAPEx{} generally achieves the highest AMI when methods can reliably recover the planted partition.
	There is a case, heterogeneous with $\mu = 0.75$, in which recovering the communities is the most difficult. Here we see that \SNAPEx{} is not the best; however the values of the AMIs are considerably low.
	
	A closer look at the table suggests that the advantage of \SNAPEx{} over the other methods tends to increase with the heterogeneity. For example, with $\mu = 0.65$, it outperforms its best competitor by $+0.005$ in the intermediate scenario, by $+0.081$ in the heterogeneous scenario. A possible explanation is that, although resolution parameters allow partitions of different sizes to be obtained, not all partition sizes are attainable under each objective function.

	In Figure~\ref{fig:vs-modularity}, we consider the heterogeneous scenario with $\mu = 0.65$. On the $x$-axis, we report the number of communities obtained for some value of the resolution parameter, and on the $y$-axis, the corresponding AMI. The vertical line with $K = 75$ corresponds to the true value of the partition size. 
	Leiden and Louvain produce at most $68$ and $67$ communities, respectively; that is, even though the resolution parameter is varied, neither method reaches the planted value $K = 75$. The reason for this could be that varying the parameter $\gamma$ is not enough to overcome the resolution limit. 
	A similar result is observed for CPM, the smallest attainable number of communities is $180$, well above the planted value $75$.  
	A possible explanation is that CPM relies on a constant density threshold, which may be less suited to distinguishing communities whose sizes span the broad range $[20,1000]$.
	Conversely, in this experiment the \SNAPEx{} sweep spans a much wider range of partition sizes than the competing objectives---from a few dozen to several thousand communities---and it is the only method that reaches the planted value $K=75$ (Figure~\ref{fig:vs-modularity}).
	
	\begin{figure}[t]
		\centering
		\includegraphics[width=0.72\textwidth]{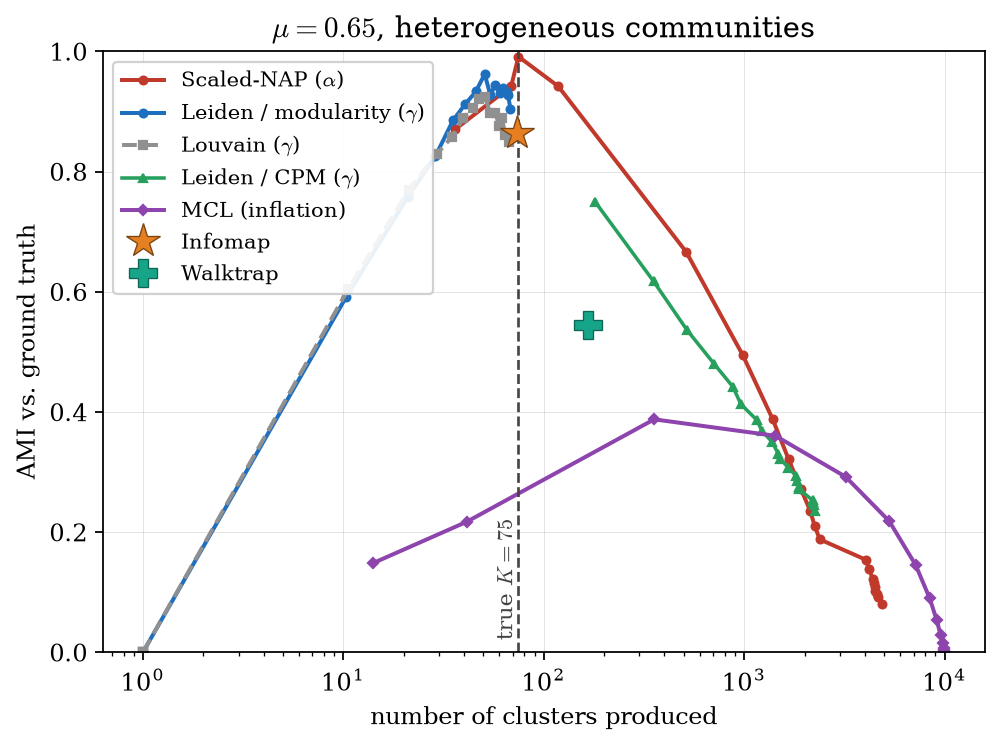}
		\caption{Representative cell: AMI as a function of the number of clusters produced, heterogeneous regime ($\mu=0.65$, community sizes in $[20,1000]$, true $K=75$). The vertical dashed line marks the true number of communities.}
		\label{fig:vs-modularity}
	\end{figure}

	Turning to the four additional methods, as can be seen in Table~\ref{tab:vs-multibaseline}, in the homogeneous scenario all methods perform well when community structure is easily detectable. As \(\mu\) increases, however, Infomap emerges as the strongest among the four additional methods: at $\mu=0.75$, it retains an AMI of $0.986$, while MCL and Walktrap drop substantially and LPA fails to recover the planted structure.
	The result is remarkable, as Infomap does not require a resolution parameter to be tuned. 
	The findings are confirmed in the intermediate scenario: Infomap remains highly competitive, but its performance drops sharply when recognition is the hardest, i.e. at $\mu = 0.75$. 
	The same happens in the heterogeneous scenario. 
	In Figure~\ref{fig:vs-modularity}, the number of communities and the corresponding AMI are reported. There, it can be seen that, for this specific instance, Infomap 
	produces a partition with $73$ communities, very close to the planted value $75$.
	Walktrap outputs a larger number of communities, while MCL, through its inflation parameter, can produce partitions with cardinalities both below and above $K$; however, its AMI remains below that of the best-performing methods. Finally, LPA is not reported as its AMI is zero. 
	
	Overall, the unweighted comparison shows that community structures at different levels of resolution can be recovered, and that, when the exact number of communities is known, \SNAPEx{} optimization performs well against a broad set of alternative community detection methods. However, it should be outlined that Infomap is a strong competitor and has the advantage of being parameter-free, although its performance deteriorates in some of the most difficult cases in which \SNAPEx{} still provides accurate recovery. 

	\begin{table}[t]
		\centering
		\setlength{\tabcolsep}{4pt}
		\begin{tabular}{|l l c |c c c c | c c c c|}
			\toprule
			heterogeneity & $\mu$ & $K$ & \SNAPEx{} & Leid. & Louv. & CPM & MCL & Info. & Walk. & LPA\\
			\midrule
			homogeneous   & 0.40 & 147 & \textbf{1.000} & \textbf{1.000} & \textbf{1.000} & \textbf{1.000} & \textbf{1.000} & \textbf{1.000} & \textbf{1.000} & \textbf{1.000}\\
			homogeneous   & 0.50 & 147 & \textbf{1.000} & \textbf{1.000} & \textbf{1.000} & \textbf{1.000} & \textbf{1.000} & \textbf{1.000} & 0.999 & 0.999\\
			homogeneous   & 0.55 & 147 & \textbf{1.000} & \textbf{1.000} & \textbf{1.000} & \textbf{1.000} & 0.999 & \textbf{1.000} & 0.993 & 0.999\\
			homogeneous   & 0.65 & 148 & \textbf{1.000} & \textbf{1.000} & 0.998 & \textbf{1.000} & 0.981 & \textbf{1.000} & 0.857 & 0.994\\
			homogeneous   & 0.75 & 147 & 0.976 & 0.989 & 0.953 & \textbf{0.992} & 0.452 & 0.986 & 0.422 & 0.000\\
			\midrule
			intermediate      & 0.40 & 131 & \textbf{1.000} & 0.998 & 0.998 & \textbf{1.000} & 0.953 & \textbf{1.000} & \textbf{1.000} & \textbf{1.000}\\
			intermediate      & 0.50 & 133 & \textbf{1.000} & 0.998 & 0.998 & \textbf{1.000} & 0.880 & \textbf{1.000} & 0.997 & \textbf{1.000}\\
			intermediate      & 0.55 & 129 & \textbf{1.000} & 0.997 & 0.997 & \textbf{1.000} & 0.828 & \textbf{1.000} & 0.980 & \textbf{1.000}\\
			intermediate      & 0.65 & 134 & 0.999 & 0.994 & 0.990 & 0.992 & 0.647 & \textbf{1.000} & 0.793 & 0.979\\
			intermediate      & 0.75 & 130 & \textbf{0.896} & 0.865 & 0.780 & 0.859 & 0.106 & 0.048 & 0.364 & 0.000\\
			\midrule
			heterogeneous & 0.40 &  76 & \textbf{1.000} & 0.997 & 0.997 & 0.951 & 0.651 & \textbf{1.000} & 0.996 & \textbf{1.000}\\
			heterogeneous & 0.50 &  68 & \textbf{1.000} & 0.997 & 0.997 & 0.893 & 0.442 & \textbf{1.000} & 0.933 & 0.911\\
			heterogeneous & 0.55 &  70 & \textbf{1.000} & 0.997 & 0.996 & 0.872 & 0.362 & \textbf{1.000} & 0.848 & 0.550\\
			heterogeneous & 0.65 &  75 & \textbf{0.991} & 0.910 & 0.848 & 0.750 & 0.217 & 0.864 & 0.544 & 0.000\\
			heterogeneous & 0.75 &  75 & 0.323 & \textbf{0.347} & 0.325 & 0.328 & 0.057 & 0.000 & 0.213 & 0.000\\
			\bottomrule
		\end{tabular}
		\caption{Computational results about the AMIs of different methods on unweighted graphs.} 
		\label{tab:vs-multibaseline}
		
	\end{table}
	
	\paragraph{Weighted networks} 
	The previous comparison between all community detection methods is then repeated for weighted graphs.
	The same scenarios, for the community size distribution, i.e. homogeneous, intermediate, and heterogeneous, are retained, while edge weights are generated using different values of $\mu_w$.  Preliminary results on the homogeneous and intermediate scenarios show only marginal differences among methods; therefore, we focus the discussion on the heterogeneous case, as the comparison is more informative.
	For this scenario, we consider
	$\mu\in\{0.4,0.5,0.55,0.65,0.75\}$ paired with $\mu_w\in\{0.1,0.3,0.5,0.7\}$ assigned on the same graph. 
	
	Results about the AMIs of each method are reported in
	Table~\ref{tab:weighted-vs}. The results are more nuanced than in the
	unweighted case, with no method uniformly dominating across all
	configurations. More informative differences emerge when the results are
	distinguished according to whether weights reinforce or contrast with the
	planted community structure.

	To interpret these results, we can compare the methods in three scenarios: the unweighted case, the case in which weights are informative about the real community structure, that is when $\mu_w < \mu$, and finally the case in which weights can be misleading, that is when $\mu_w > \mu$. When the graph is unweighted, on these network instances we see that the previous results are confirmed, as \SNAPEx{} is best in 4 cases out of 5. Leiden (modularity) is best when the mixing parameter $\mu$ is the highest, $\mu = 0.75$, but it is a case in which all AMIs are low. Infomap is good in the easiest instances, $\mu \le 0.55$, but very inaccurate in the hardest: AMI = 0.0 for $\mu = 0.75$.
	
	Looking at the cases in which weights are informative ($\mu_w < \mu)$, we can find that all methods do well. Even when a method is not best, such as \SNAPEx{} for $\mu = 0.4, \mu_w = 0.3$, the differences are negligible, with many methods recovering the true community structure. Finally, when weight may mislead the community structure ($\mu_w > \mu)$, we can see that Infomap suffers a sharp decline in its accuracy, while \SNAPEx{}, Leiden and Walktrap can still recognize a consistent partition.

	\begin{table}[t]
		\centering
		\small
		\setlength{\tabcolsep}{4.5pt}
		\begin{tabular}{|ccl|cccc|cccc|}
			\toprule
			$\mu$ & true $K$ & $\mu_w$ & \SNAPEx{} & Leiden & Louvain & CPM & MCL & Infomap & Walktrap & LPA \\
			\midrule
			\multirow{5}{*}{$0.4$} & \multirow{5}{*}{$72$} & unw. & \textbf{1.000} & 0.998 & 0.998 & 0.956 & 0.657 & \textbf{1.000} & 0.997 & \textbf{1.000} \\
			&  & $0.1$ & \textbf{1.000} & \textbf{1.000} & \textbf{1.000} & \textbf{1.000} & \textbf{1.000} & \textbf{1.000} & \textbf{1.000} & \textbf{1.000} \\
			&  & $0.3$ & 0.999 & 0.999 & 0.999 & \textbf{1.000} & 0.945 & \textbf{1.000} & \textbf{1.000} & \textbf{1.000} \\
			&  & $0.5$ & 0.993 & \textbf{0.995} & 0.992 & 0.697 & 0.457 & 0.882 & 0.931 & 0.362 \\
			&  & $0.7$ & 0.008 & 0.018 & 0.007 & 0.017 & 0.023 & 0.000 & \textbf{0.145} & 0.003 \\
			\midrule
			\multirow{5}{*}{$0.5$} & \multirow{5}{*}{$71$} & unw. & \textbf{1.000} & 0.997 & 0.997 & 0.899 & 0.439 & \textbf{1.000} & 0.936 & 0.961 \\
			&  & $0.1$ & \textbf{1.000} & \textbf{1.000} & \textbf{1.000} & \textbf{1.000} & \textbf{1.000} & \textbf{1.000} & \textbf{1.000} & \textbf{1.000} \\
			&  & $0.3$ & \textbf{1.000} & \textbf{1.000} & \textbf{1.000} & \textbf{1.000} & 0.945 & \textbf{1.000} & \textbf{1.000} & \textbf{1.000} \\
			&  & $0.5$ & 0.967 & 0.996 & 0.997 & 0.972 & 0.511 & \textbf{1.000} & 0.960 & 0.997 \\
			&  & $0.7$ & 0.038 & 0.074 & 0.037 & 0.063 & 0.037 & 0.000 & \textbf{0.257} & 0.000 \\
			\midrule
			\multirow{5}{*}{$0.55$} & \multirow{5}{*}{$69$} & unw. & \textbf{1.000} & 0.997 & 0.996 & 0.853 & 0.356 & \textbf{1.000} & 0.836 & 0.370 \\
			&  & $0.1$ & \textbf{1.000} & \textbf{1.000} & \textbf{1.000} & \textbf{1.000} & \textbf{1.000} & \textbf{1.000} & \textbf{1.000} & \textbf{1.000} \\
			&  & $0.3$ & \textbf{1.000} & \textbf{1.000} & \textbf{1.000} & \textbf{1.000} & 0.946 & \textbf{1.000} & \textbf{1.000} & \textbf{1.000} \\
			&  & $0.5$ & 0.973 & 0.997 & 0.997 & 0.975 & 0.542 & \textbf{1.000} & 0.957 & \textbf{1.000} \\
			&  & $0.7$ & 0.085 & 0.136 & 0.088 & 0.109 & 0.040 & 0.000 & \textbf{0.303} & 0.000 \\
			\midrule
			\multirow{5}{*}{$0.65$} & \multirow{5}{*}{$74$} & unw. & \textbf{0.991} & 0.887 & 0.835 & 0.753 & 0.213 & 0.893 & 0.547 & 0.000 \\
			&  & $0.1$ & \textbf{1.000} & \textbf{1.000} & \textbf{1.000} & \textbf{1.000} & \textbf{1.000} & 0.923 & \textbf{1.000} & \textbf{1.000} \\
			&  & $0.3$ & \textbf{1.000} & \textbf{1.000} & \textbf{1.000} & 0.999 & 0.956 & \textbf{1.000} & \textbf{1.000} & \textbf{1.000} \\
			&  & $0.5$ & 0.999 & 0.998 & 0.997 & 0.974 & 0.581 & \textbf{1.000} & 0.970 & \textbf{1.000} \\
			&  & $0.7$ & 0.342 & 0.443 & 0.347 & 0.269 & 0.063 & 0.000 & \textbf{0.464} & 0.000 \\
			\midrule
			\multirow{5}{*}{$0.75$} & \multirow{5}{*}{$70$} & unw. & 0.307 & \textbf{0.327} & 0.311 & 0.303 & 0.048 & 0.000 & 0.199 & 0.000 \\
			&  & $0.1$ & \textbf{1.000} & \textbf{1.000} & \textbf{1.000} & \textbf{1.000} & \textbf{1.000} & 0.396 & \textbf{1.000} & 0.891 \\
			&  & $0.3$ & \textbf{1.000} & \textbf{1.000} & \textbf{1.000} & 0.995 & 0.987 & 0.926 & 0.999 & 0.892 \\
			&  & $0.5$ & \textbf{1.000} & 0.999 & 0.999 & 0.969 & 0.648 & 0.942 & 0.975 & 0.905 \\
			&  & $0.7$ & 0.781 & \textbf{0.842} & 0.718 & 0.737 & 0.067 & 0.142 & 0.543 & 0.000 \\
			\bottomrule
		\end{tabular}
		\caption{Computational results about the AMIs of different methods on weighted graphs.}
		\label{tab:weighted-vs}
	\end{table}

	\subsection {Community detection on real-world networks} 
	\label{sec:real-world}
	
	LFR benchmarks provide networks with a known ground-truth community structure whose characteristics can be systematically controlled,
	but they do not reproduce the irregularities of real networks: strongly heterogeneous degree distributions, large numbers of small communities, partial overlaps, and incomplete ground-truth coverage.  
	To investigate whether the results obtained on LFR carry over to real data, we evaluate the \milanoEx{} algorithm on three \SNAPcoll networks with available ground truth~\cite{YangLeskovec2015}: \texttt{com-Amazon} (product co-purchasing), \texttt{com-DBLP} (scientific co-authorship), and \texttt{com-YouTube} (user friendships and groups). 
	The three networks range in size from $\sim\!3.3\times10^{5}$ to $\sim\!1.1\times10^{6}$ nodes,
	and therefore make it possible to assess \emph{quality} and \emph{scalability} jointly.
	
	The \SNAPcoll ground truth consists of \emph{overlapping} and \emph{partial} communities, as only a fraction of the nodes belong to a known community and some of them to more than one. Under these conditions,  standard measures such as NMI and AMI cannot be applied as they assume the community structure to be a complete, non-overlapping partition. Therefore, in the following experiments we will compare methods through the $F_1$ score,  
	see~\cite{YangLeskovec2013}. Given the predicted partition $\Pi=\{\mathcal C_1,\ldots,\mathcal C_q\}$ and the set of ground-truth communities $\mathcal{S}$, 
	and setting \[
	F_1(\mathcal C_i,S_j)
	=
	\frac{2|\mathcal C_i\cap S_j|}
	{|\mathcal C_i|+|S_j|},
	\]
	the overall $F_1$ score is defined as:
	\begin{equation}
		\label{Score}
		F_1
		=\frac{1}{2}\left(
		\frac{1}{|\mathcal S|}
		\sum_{S_j\in\mathcal S}
		\max_{\mathcal C_i\in\Pi}
		F_1(\mathcal C_i,S_j)
		+
		\frac{1}{|\Pi|}
		\sum_{\mathcal C_i\in\Pi}
		\max_{S_j\in\mathcal S}
		F_1(\mathcal C_i,S_j)
		\right)
	\end{equation}
	
	i.e.\ the average of the two best-matching directional scores (ground truth$\to$predicted and predicted$\to$ground truth). By construction, the $F_1$ is a value that ranges between $0$ and $1$, with the highest values corresponding to the greatest agreement between the detected and ground-truth communities. 
	
	We compare \milanoEx{} against seven other methods/algorithms. Each method is run in its native implementation: Louvain and Leiden with the modularity objective~\cite{blondel2008fast,traag2019louvain}, Leiden with the Constant Potts Model objective (\texttt{Leiden-CPM}), Infomap~\cite{rosvall2008maps}, Walktrap~\cite{pons2005computing}, Markov Clustering (MCL)~\cite{dongen2000graph}, and Label Propagation (LPA)~\cite{raghavan2007near}. For each method that exposes a tuning parameter, we perform a full sweep: 
	\milanoEx{} over $\alpha\in[0,1]$ with step $0.05$ (21 values); Leiden and Louvain over the resolution parameter $\gamma\in[0.2,2.0]$ with step $0.1$ (19 values); Leiden-CPM over its resolution parameter in $[0.01,0.20]$ with step $0.01$ (20 values); MCL over the inflation parameter in $[1.1,2.5]$ with step $0.1$ (15 values). For each method, the configuration achieving the highest $F_1$
	score over the parameter sweep is retained for comparison. Each sweep includes the default value of the corresponding native implementation ($\gamma=1.0$ for Louvain and Leiden, at which the objective coincides with plain Newman--Girvan modularity, and inflation $2.0$ for MCL). It is worth noting that the default implementation of Leiden-CPM has the parameter $\gamma=1.0$, that is not meaningful on sparse graphs. Indeed, in the CPM objective $\gamma$ acts as an internal edge-density threshold, so $\gamma=1.0$ would only accept near-cliques, and the useful range on networks of this density lies orders of magnitude lower --- as confirmed by the best-scoring values found ($0.05$ in the Amazon network, $0.19$ in DBLP). Walktrap, Infomap, and LPA do not expose a resolution parameter and are run once. Every randomized algorithm is run allowing for 10 restarts, the partition with the best internal quality is retained.
	
	Runs on the real-world networks share the hardware and the timeout/OOM conventions of Section~\ref{subsec:baselines}. The TO and OOM outcomes are stated explicitly and they are a sign of the scalability of the methods. Alongside wall-clock time, the driver records the CPU time of every execution (user plus system time of the child process, all threads included); the comparison of the two is reported below.
	
	Table~\ref{tab:real-graphs} summarizes the characteristics of the three networks  (\texttt{com-Amazon}, \texttt{com-DBLP}, and \texttt{com-YouTube}), where it can be seen that only a small fraction of the nodes are reported as belonging to at least one community. Table~\ref{tab:real-f1} reports the $F_1$ of all methods, together with the number of communities produced by the best configuration.

	\begin{table}[t]
		\centering
		\small
		\begin{tabular}{lrrrr}
			\toprule
			Network & $|V|$ & $|E|$ & K & covered nodes \\
			\midrule
			\texttt{com-Amazon} & 334\,863 & 925\,872 & 5\,000 & 16\,716 \\
			\texttt{com-DBLP} & 317\,080 & 1\,049\,866 & 5\,000 & 93\,432 \\
			\texttt{com-YouTube} & 1\,134\,890 & 2\,987\,624 & 5\,000 & 39\,841 \\
			\bottomrule
		\end{tabular}
		\caption{\SNAPcoll real-world networks: sizes and ground truth.}
		\label{tab:real-graphs}
		
	\end{table}
	
	\begin{table}[t]
		\centering
		\small
		\setlength{\tabcolsep}{5pt}
		\begin{tabular}{lccc}
			\toprule
			& \texttt{com-Amazon} & \texttt{com-DBLP} & \texttt{com-YouTube} \\
			Method & \small($3.3\!\times\!10^{5}$) & \small($3.2\!\times\!10^{5}$) & \small($1.1\!\times\!10^{6}$) \\
			\midrule
			\milanoEx{}
			& $0.485$ \tiny(23461)
			& $\mathbf{0.398}$ \tiny(56269)
			& $\mathbf{0.254}$ \tiny(260518) \\
			\midrule
			Walktrap
			& $\mathbf{0.506}$ \tiny(14905)
			& $0.363$ \tiny(30425)
			& OOM \\
			MCL
			& $0.484$ \tiny(21550)
			& $0.355$ \tiny(48297)
			& OOM \\
			LPA
			& $0.476$ \tiny(22427)
			& $0.384$ \tiny(23843)
			& $0.051$ \tiny(33398) \\
			Leiden-CPM
			& $0.468$ \tiny(37065)
			& $0.397$ \tiny(91187)
			& TO \\
			\midrule
			Leiden
			& $0.260$ \tiny(275)
			& $0.190$ \tiny(270)
			& $0.017$ \tiny(7575) \\
			Louvain
			& $0.200$ \tiny(116)
			& $0.158$ \tiny(637)
			& $0.018$ \tiny(5932) \\
			Infomap
			& $0.063$ \tiny(14)
			& $0.080$ \tiny(513)
			& $0.015$ \tiny(912) \\
			\bottomrule
		\end{tabular}
		\caption{Computational results on real networks: $F_1$ values, TO~$=$~timeout, OOM~$=$~out-of-memory.}
		\label{tab:real-f1}
		
	\end{table}
	
	\noindent Three observations emerge from Table~\ref{tab:real-f1}.
	First, on all three networks \milanoEx{} outperforms Leiden and Louvain by a wide margin: $0.485$ versus $0.260$/$0.200$ on Amazon, $0.398$ versus $0.190$/$0.158$ on DBLP, and $0.254$ versus $0.017$/$0.018$ on YouTube, with a gap ranging from a factor of $\sim\!2$ to more than one order of magnitude. The mechanism echoes what was observed on LFR: the ground truth is made up of many small communities (the top $5\,000$ per network), whereas modularity optimization, limited by its resolution, collapses to only a few hundred large communities ($116$--$637$) on the medium-sized networks --- and a few thousand on \texttt{com-YouTube} --- even when the whole range of the resolution parameter is explored. \milanoEx{}, by contrast, produces fine-grained partitions ($2.3\times10^{4}$--$2.6\times10^{5}$ communities) in which the small communities match the ground-truth more accurately.
	
	\noindent Second, although Infomap is competitive on LFR benchmarks (Section~\ref{sec:comparison}),
	its performance drops sharply on the three real-world networks considered here ($F_1$ of $0.063$, $0.080$, $0.015$), 
	producing very coarse partitions (as few as $14$ communities on Amazon).
	The strongest competitor on synthetic data is thus among the weakest on this real data.
	
	\noindent Third, on the two medium-sized networks (Amazon, DBLP), \milanoEx{} is among the best methods; it can be seen that both Leiden-CPM and Walktrap have a good ability to recognize the network structure (Walktrap is the best on Amazon, Leiden-CPM is second by a narrow margin on DBLP). However, the two methods cannot be used on the largest network, YouTube, in which memory or time limits are exceeded. More precisely: Walktrap and MCL exhaust the $128$~GB of RAM (OOM), Leiden-CPM exceeds the one-hour limit for a single execution (TO), and the quality of LPA drops from $0.476$/$0.384$ on the medium networks to $0.051$ (more on computational times in the next paragraph). As a conclusion, on real networks \milanoEx{} seems to be the best.  
	
	Finally, \milanoEx{} is also the fastest method of all: in Table~\ref{tab:real-times} the running times are reported. They are referred to a single configuration for each method (i.e.\ one $\alpha$ value for \milanoEx{}, one resolution value for Leiden/Louvain/Leiden-CPM, one inflation value for MCL, or the single configuration of Walktrap/Infomap/LPA), averaged over the parameter sweep where applicable. For every stochastic method the ten restarts of a configuration are included in the reported time.
	Times were measured in the Python driver script as wall-clock time around each execution (thus including input/output and process launch), so that all methods are timed by the same procedure, with \milanoEx{} run single-threaded (\texttt{-{}-threads=1}). The driver also records the CPU time of every execution (user plus system time of the child process, all threads included): for every method except Infomap the CPU-to-wall ratio is $1.00$ within $1\%$ on all three networks, confirming that the executions are effectively single-threaded and CPU-bound. The exception is Infomap, whose ratio of $2.0$, $1.8$, and $1.5$ on the three networks reveals internal multi-threading in its native implementation: its wall-clock times in Table~\ref{tab:real-times} are therefore favoured relative to the single-threaded protocol, which we note for fairness --- the ranking of Table~\ref{tab:real-f1} is unaffected.
	
	\begin{table}[t]
		\centering
		\caption{Computational times on real networks.}
		\label{tab:real-times}
		\small
		\setlength{\tabcolsep}{6pt}
		\begin{tabular}{lccc}
			\toprule
			& \texttt{com-Amazon} & \texttt{com-DBLP} & \texttt{com-YouTube} \\
			Method & \small($3.3\!\times\!10^{5}$) & \small($3.2\!\times\!10^{5}$) & \small($1.1\!\times\!10^{6}$) \\
			\midrule
			\milanoEx{} (persistence)
			& $\mathbf{15.1}$ & $\mathbf{15.7}$ & $\mathbf{76.2}$ \\
			\midrule
			Walktrap & $1234.4$ & $1835.4$ & OOM \\
			MCL & $181.0$ & $171.8$ & OOM \\
			LPA & $2119.1$ & $3744.2$ & $7139.5$ \\
			Leiden-CPM & $2097.3$ & $1871.2$ & TO \\
			\midrule
			Leiden & $1078.0$ & $799.4$ & $15343.8$ \\
			Louvain & $242.0$ & $348.9$ & $871.6$ \\
			Infomap$^{\dagger}$ & $113.2$ & $89.0$ & $435.7$ \\
			\bottomrule
		\end{tabular}
	\end{table}
	
	The computational times certify the scalability of the methods. On all three networks \milanoEx{} is the fastest method by a wide margin, despite running on a single thread: a single $\alpha$ configuration takes on average $15$--$16$~s on the medium-sized networks and only $76.2$~s on \texttt{com-YouTube}. By contrast, on \texttt{com-YouTube} a single Leiden configuration (ten restarts) takes on average $1.5\times10^{4}$~s, i.e.\ roughly $200\times$ slower than \milanoEx{}, and even the single LPA configuration costs $7\,140$~s, about $94\times$ slower --- for a solution with an $F_1$ score about five times lower. 
	The timings also explain the failures reported in Table~\ref{tab:real-f1}: Leiden-CPM already requires $\sim\!1\,900$--$2\,100$~s per configuration on the medium-sized networks, and on \texttt{com-YouTube} a single execution exceeds the one-hour limit (TO), while Walktrap and MCL exhaust memory (OOM). The modest 
	growth of the persistence running time with network size ($\sim\!5\times$ from Amazon to YouTube, against a $3.2\times$ growth in edges) is, in practice, what makes \milanoEx{} usable at the scale where the fine-grained baselines become inapplicable.

	\section{Conclusions}
	\label{sec:conclusions}
	
	We introduced the \emph{Scaled Null-Adjusted Persistence} (\SNAPEx{}), a parametrized measure to assess the quality of a network partition, that includes the null-adjusted persistence and the modularity at the two extremes of its parametrization. The two measures are obtained by controlling the cluster volumes through the parameter $\alpha\in[0,1]$: NAP is obtained at $\alpha=0$ and modularity at $\alpha=1$. Using $\alpha$ as a resolution knob, a researcher can move from small and fine-grained communities, as is typical of NAP, to large and coarse communities as typical of modularity.
	
	We derived analytical properties of the family---including conditions under which two clusters are merged---and we showed how it can be applied to real networks. We developed \milanoEx{}, a scalable Louvain-style heuristic for its optimization; on the LFR benchmarks the accuracy-maximizing exponent $\alpha^\star$ proved remarkably stable, lying in the narrow band $[0.85,0.95]$ across all configurations, and \SNAPEx{} attained the highest AMI in every cell in which the planted structure is recoverable. On real-world networks with up to $1.1\times10^{6}$ nodes it was best or tied for best in best-matching $F_1$ on \texttt{com-DBLP} and \texttt{com-YouTube} and competitive on \texttt{com-Amazon}, while running fastest by a wide margin on a single thread.
	
	A further direction of research is the design of more sophisticated algorithms for \SNAPEx{} optimization. As experienced for modularity maximization, various interchange and long-term search heuristics can be implemented, improving the objective function and the partition quality at the cost of greater computational time. The \SNAPEx{} equation can also be elaborated further, to analyse directed networks or the cases in which communities can overlap. Finally, some domain-specific networks are worth analysing through \SNAPEx{}, for example financial systems, a case in which the persistence of the information flow within a group is economically meaningful.
	
	\bibliographystyle{plain}
	\bibliography{references}

	\appendix
	
	\section{An exact mixed-integer linear programming formulation}
	\label{app:ilp}
	
	In this appendix we show that, for a
	fixed value of $\alpha$, the \SNAPEx{} maximization
	problem~\eqref{snap_optimization} can be formulated as a mixed-integer linear program (MILP).
	This is of practical interest on small instances, where a general-purpose MILP
	solver returns a \emph{provably optimal} partition and thus an exact reference
	against which the \milanoEx heuristic of Section~\ref{sec:milano} can be validated. The
	formulation does not remove the intrinsic difficulty of the problem: since
	$\sNAP{1}(\Pi)=Q(\Pi)$ by~\eqref{nap_modularity_relation}, maximizing \SNAPEx{}
	at $\alpha=1$ is modularity maximization, which is NP-hard~\cite{brandes2008}, so
	the model is expected to be solvable only for networks of intermediate size.
	
	Throughout this section $\alpha\in[0,1]$ is fixed and we keep the notation
	introduced in the previous sections, together with the compact form of the
	objective in~\eqref{snap_weighted_compact},
	\begin{equation*}
		\sNAP{\alpha}(\mathcal C)
		=
		\frac{2\,w_{\mathrm{in}}(\mathcal C)\,\operatorname{vol}(\mathcal C)^{\alpha-1}}{(2S)^{\alpha}}
		-
		\frac{\operatorname{vol}(\mathcal C)^{\alpha+1}}{(2S)^{\alpha+1}} .
	\end{equation*}
	The only structural assumption we use is that the edge weights are integer valued
	(rational weights can be rescaled to integers), so that
	every cluster volume $\operatorname{vol}(\mathcal C)=\sum_{i\in\mathcal C}s_i$
	takes values in the finite set $\{0,1,\dots,V_{\max}\}$, with $V_{\max}=2S$. This is
	what lets us replace the non-integer powers $\operatorname{vol}(\mathcal
	C)^{\alpha\pm1}$ by \emph{precomputed constants}. 
	
	
	\paragraph{Decision variables.}
	We allow up to $K$ communities indexed by $c\in[K]=\{1,\dots,K\}$; taking $K=n$ is
	without loss of generality. For $v\in\{1,\dots,V_{\max}\}$ precompute the constants
	\begin{equation}
		\label{eq:ilp-coeffs}
		\gamma^{\mathrm{in}}_v=\frac{2\,v^{\,\alpha-1}}{(2S)^{\alpha}},
		\qquad
		\gamma^{\mathrm{vol}}_v=\frac{v^{\,\alpha+1}}{(2S)^{\alpha+1}},
		\qquad
		\gamma^{\mathrm{in}}_0=\gamma^{\mathrm{vol}}_0=0 .
	\end{equation}
	The variables are $x_{ic}\in\{0,1\}$ ($i\in V$, $c\in[K]$), equal to $1$ if node
	$i$ is assigned to community $c$; $y_{ijc}\in\{0,1\}$ ($\{i,j\}\in E$, $c\in[K]$),
	equal to $1$ if both endpoints of $\{i,j\}$ lie in $c$; $z_{cv}\in\{0,1\}$
	($c\in[K]$, $v\in\{0,\dots,V_{\max}\}$), equal to $1$ if community $c$ has volume
	exactly $v$; and $t_{cv}\ge 0$, which linearizes the product $W_c\,z_{cv}$
	(see~\eqref{eq:ilp-mccormick} below). We use the linear expressions
	\begin{equation}
		\label{eq:ilp-aux}
		W_c=\sum_{\{i,j\}\in E} w_{ij}\,y_{ijc}\ \bigl(=w_{\mathrm{in}}(\mathcal C_c)\bigr),
		\qquad
		\mathrm{Vol}_c=\sum_{i\in V} s_i\,x_{ic}\ \bigl(=\operatorname{vol}(\mathcal C_c)\bigr),
		\qquad U=S\ \ (W_c\le U).
	\end{equation}
	
	\paragraph{The model $(\mathrm{P}^{\mathrm{MILP}}_\alpha)$.}
	\begin{align}
		\max_{x,y,z,t}\quad
		& \sum_{c\in[K]}\Bigl(\ \sum_{v=1}^{V_{\max}}\gamma^{\mathrm{in}}_v\,t_{cv}
		\;-\;\sum_{v=1}^{V_{\max}}\gamma^{\mathrm{vol}}_v\,z_{cv}\ \Bigr)
		\label{eq:ilp-obj}\\[2pt]
		\text{s.t.}\quad
		& \sum_{c\in[K]} x_{ic}=1, & \forall i\in V, \label{eq:ilp-assign}\\
		& y_{ijc}\le x_{ic},\quad y_{ijc}\le x_{jc},\quad y_{ijc}\ge x_{ic}+x_{jc}-1,
		& \forall\{i,j\}\in E,\ \forall c, \label{eq:ilp-and}\\
		& \sum_{v=0}^{V_{\max}} z_{cv}=1, & \forall c, \label{eq:ilp-onehot}\\
		& \sum_{v=0}^{V_{\max}} v\,z_{cv}=\mathrm{Vol}_c, & \forall c, \label{eq:ilp-volsel}\\
		& t_{cv}\le U z_{cv},\ \ t_{cv}\le W_c,\ \ t_{cv}\ge W_c-U(1-z_{cv}),\ \ t_{cv}\ge 0,
		& \forall c,\ \forall v. \label{eq:ilp-mccormick}
	\end{align}
	Constraint~\eqref{eq:ilp-assign} makes the assignment a partition;
	\eqref{eq:ilp-and} is the standard linearization forcing
	$y_{ijc}=x_{ic}\wedge x_{jc}$, so that $W_c=w_{\mathrm{in}}(\mathcal C_c)$;
	\eqref{eq:ilp-onehot}--\eqref{eq:ilp-volsel} select the (integer) volume of each
	community through the one-hot vector $z_{c\cdot}$; and the McCormick
	inequalities~\eqref{eq:ilp-mccormick} enforce $t_{cv}=W_c\,z_{cv}$, i.e.\
	$t_{cv}=W_c$ for the selected volume and $t_{cv}=0$ otherwise. The nonlinear terms
	of the objective are thereby recovered exactly: if community $c$ has volume
	$v^\star$, then $\sum_v\gamma^{\mathrm{in}}_v t_{cv}
	=2W_c(v^\star)^{\alpha-1}/(2S)^{\alpha}
	=2w_{\mathrm{in}}(\mathcal C_c)\operatorname{vol}(\mathcal C_c)^{\alpha-1}/(2S)^{\alpha}$
	and $\sum_v\gamma^{\mathrm{vol}}_v z_{cv}
	=\operatorname{vol}(\mathcal C_c)^{\alpha+1}/(2S)^{\alpha+1}$.
	
	\begin{proposition}[Exactness]
		\label{prop:ilp-exact}
		For any fixed $\alpha\in[0,1]$ and any $K$ at least as large as the number of
		communities in an optimal partition of~\eqref{snap_optimization}, the optimal
		value of $(\mathrm{P}^{\mathrm{MILP}}_\alpha)$ equals
		$\max_{\Pi}\sNAP{\alpha}(\Pi)$, and every optimal $\{x_{ic}\}$ induces an
		optimal partition through $\mathcal C_c=\{i:x_{ic}=1\}$. Taking $K=n$ is always
		sufficient.
	\end{proposition}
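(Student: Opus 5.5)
The plan is a two-directional correspondence between feasible points of $(\mathrm{P}^{\mathrm{MILP}}_\alpha)$ and partitions of $V$ whose objective values agree exactly. It is followed by a reduction from arbitrary partitions to the connected partitions in $\wp(G)$, since the model imposes no connectivity constraint.

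\textbf{Step 1 (encoding of a feasible point).} Fix a feasible $(x,y,z,t)$. By \eqref{eq:ilp-assign}, the sets $\mathcal C_c=\{i:x_{ic}=1\}$ form a partition of $V$, possibly with empty classes. By \eqref{eq:ilp-and}, $y_{ijc}=x_{ic}x_{jc}$, so $W_c=w_{\mathrm{in}}(\mathcal C_c)$ and $\mathrm{Vol}_c=\operatorname{vol}(\mathcal C_c)$. The weights are integers, so $\mathrm{Vol}_c\in\{0,\dots,V_{\max}\}$. Constraints \eqref{eq:ilp-onehot}--\eqref{eq:ilp-volsel} with binary $z$ then force $z_{cv}=1$ exactly for $v=v^\star_c:=\operatorname{vol}(\mathcal C_c)$.

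Next I check the McCormick system \eqref{eq:ilp-mccormick} with $0\le W_c\le U=S$. For $z_{cv}=0$ it gives $t_{cv}=0$. For $z_{cv}=1$ it gives $t_{cv}=W_c$. Hence the objective contribution of class $c$ is $\gamma^{\mathrm{in}}_{v^\star_c}W_c-\gamma^{\mathrm{vol}}_{v^\star_c}$. For a nonempty class this equals $\sNAP{\alpha}(\mathcal C_c)$ by \eqref{snap_weighted_compact} and \eqref{eq:ilp-coeffs}. For an empty class it is $0$, because $\gamma_0=0$; this is the reason for defining the coefficients at $v=0$ separately.

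Conversely, any partition with at most $K$ classes extends to a feasible point: set $x$ from the classes, put $y=x_{ic}x_{jc}$, let $z$ be one-hot at the volume, and set $t=W_cz$. This point has objective $\sNAP{\alpha}(\Pi)$. Consequently the optimal MILP value equals the maximum of $\sNAP{\alpha}$ over \emph{all} partitions of $V$ into at most $K$ classes, with the connectivity requirement dropped.

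\textbf{Step 2 (connectivity, the main obstacle).} The model does not enforce that each $G_{\mathcal C_c}$ is connected, so I must show that dropping this requirement does not increase the maximum. Let $\mathcal C$ be a disconnected class, split as $\mathcal C=A\cup B$ with no edges between $A$ and $B$, so $w_{AB}=0$.

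The algebra in the proof of Proposition~\ref{snap_merging_weighted} never uses connectivity. It therefore shows that $\sNAP{\alpha}(A\cup B)-\sNAP{\alpha}(A)-\sNAP{\alpha}(B)$ has the sign of $w_{AB}-T_\alpha(A,B)$. Every term of $T_\alpha$ in \eqref{snap_weighted_threshold_def} is nonnegative. The bracketed factors are nonnegative because $(v_1+v_2)/v_r>1$ and $1-\alpha\ge0$. The last term is nonnegative by the rewriting given after the Remark, namely $1-x^{\alpha+1}-(1-x)^{\alpha+1}\ge0$. Hence $w_{AB}=0\le T_\alpha$, and splitting never decreases the objective.

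The volume-zero edge case needs a separate check. Every node has $s_i>0$ because $G$ is connected, so each vertex lies in a cluster of positive volume and the formulas apply. Iterating the split over connected components turns any partition into one in $\wp(G)$ with at least the same value.

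\textbf{Step 3 (the bound on $K$).} Splitting increases the number of classes. I therefore only need $K$ large enough to hold the components. With $K=n$ every partition fits, which gives the last sentence of the statement. For general $K$ I argue directly: an optimal connected partition is feasible whenever $K$ is at least its number of classes, so the MILP optimum is at least the connected optimum. The MILP optimum is also at most the unrestricted maximum, and by Step 2 that equals the connected optimum.

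Finally, for an optimal $x$, the induced partition attains the optimal value by Step 1. If it contains a disconnected class, Step 2 forces equality in the merge inequality, so its connected components give an optimal partition in $\wp(G)$ of the same value. I would state the ``induces an optimal partition'' claim in this sense, noting that strict inequality, and hence automatic connectivity, holds whenever $T_\alpha>0$.
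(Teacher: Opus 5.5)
Your proposal is correct. Its Step 1 is exactly the paper's proof: the paper decodes a feasible point into a partition via \eqref{eq:ilp-assign}, and identifies $W_c$, the selected volume and $t_{cv}$ through \eqref{eq:ilp-and}--\eqref{eq:ilp-mccormick}. It then checks term-by-term agreement of the objective, with empty classes contributing $0$, and concludes from ``any partition with at most $K$ classes is representable'' that the two optima coincide.

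The paper stops there. It therefore silently identifies the MILP's feasible set (all partitions of $V$ into at most $K$ classes) with the feasible set $\wp(G)$ of \eqref{snap_optimization}, even though the model imposes no connectivity constraint. Your Steps 2--3 supply exactly this missing link. Reusing the algebra of Proposition~\ref{snap_merging_weighted} with $w_{AB}=0\le T_\alpha(A,B)$ shows that refining a class into its connected components never lowers $\sNAP{\alpha}$, so the unconstrained and connected maxima agree. Your sandwich argument then covers every admissible $K$, not only $K=n$.

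The paper's argument is shorter and is complete if $\max_\Pi$ is read over all partitions. Yours proves the statement relative to $\wp(G)$, which is how it is actually phrased. It also shows why your qualifier on the last clause is needed. At $\alpha=0$ with $u_A=u_B=0$ one has $T_0(A,B)=0$: if an optimal clustering contains two non-adjacent singletons, an optimal MILP point may put them into one disconnected class at no cost, so it does not literally induce a member of $\wp(G)$. For $\alpha\in(0,1]$, by contrast, the last term of $T_\alpha$ is strictly positive, and every optimal $x$ automatically induces a connected clustering.
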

	\begin{proof}
		Any feasible integer point yields, through~\eqref{eq:ilp-assign}, a partition of
		$V$ into the nonempty classes among $\mathcal C_1,\dots,\mathcal C_K$.
		By~\eqref{eq:ilp-and}, $W_c=w_{\mathrm{in}}(\mathcal C_c)$; by~\eqref{eq:ilp-volsel}
		the unique $v$ with $z_{cv}=1$ equals $\operatorname{vol}(\mathcal C_c)$; and
		\eqref{eq:ilp-mccormick} gives $t_{cv}=W_c$ for that $v$ and $0$ otherwise.
		Substituting into~\eqref{eq:ilp-obj} and using~\eqref{eq:ilp-coeffs} reproduces
		$\sum_c \sNAP{\alpha}(\mathcal C_c)=\sNAP{\alpha}(\Pi)$ term by term (empty
		communities select $v=0$ and contribute $0$). Conversely, any partition with at
		most $K$ classes is representable, so the two optima coincide.
	\end{proof}
	
	\begin{remark}[Redundant inequalities and size]
		\label{rem:ilp-tighten}
		Since $\gamma^{\mathrm{in}}_v\ge 0$ and the model maximizes, $W_c$ (hence each
		$y_{ijc}$ and $t_{cv}$) is pushed upward at optimality: the lower bound
		$y_{ijc}\ge x_{ic}+x_{jc}-1$ in~\eqref{eq:ilp-and} and the lower McCormick bound
		$t_{cv}\ge W_c-U(1-z_{cv})$ in~\eqref{eq:ilp-mccormick} are then redundant and may
		be dropped to speed up the solver; standard symmetry-breaking constraints
		(e.g.\ ordering the communities by smallest node index) help as well. The
		formulation uses $O(nK)$ assignment, $O(mK)$ edge and $O(K\,V_{\max})$
		volume/linearization variables, i.e.\ it is \emph{pseudo-polynomial} in the total
		volume $2S$; it is therefore practical only on small, integer-weighted instances,
		which is precisely the regime of exact benchmarks.
	\end{remark}
	
	\section{Comparison between AMI, NMI and ARI}
	\label{app:measures}
	
	The comparison between the community detection methods has been carried out using the AMI index. In this appendix we motivate the choice by proving that:
	\begin{itemize}
		\item the AMI corrects the NMI, in the sense that the latter is biased towards positive values even in the cases in which no community structure is recognized;
		\item the AMI is strongly correlated with the Adjusted Rand Index (ARI)~\cite{Hubert1985}, as their results are quite similar, therefore only results about the former are reported, to avoid information redundancy.
	\end{itemize}
	
	All results in this appendix are computed on the simulated LFR networks of Section~\ref{sec:benchmark}, calculated through the three scenarios, homogeneous, $\tau_2=2.5$, sizes $[50,100]$, intermediate, $\tau_2=2.0$, $[30,300]$, heterogeneous, $\tau_2=1.5$, $[20,1000]$ (all with $N=10\,000$, $\tau_1=2.0$, $\bar{k}=25$, maximum degree $80$).
	
	The commonly used Normalized Mutual Information (NMI) is not corrected for chance, as is the AMI, and therefore systematically biased to positive values even in cases in which no community is recognized. For example, in Table~\ref{tab:measure-degeneracy} we report data about the MCL algorithm in the homogeneous scenario with $\mu=0.5$. As reported, the MCL algorithm depends on a parameter to obtain communities at different resolution level. The simulated data are characterized by $K=147$ true communities, that are recovered for a particular value of that parameter. As can be seen, the three measures ARI, NMI, and AMI are all equal to 1, proving that the communities correspond to the ground truth. 
	When the parameter is tuned, more communities are output and their distances to the ground truth are increasing, but at a different pace. Finally, the last partition is quite inconsistent, as it groups $10\,000$ nodes into $9\,030$ clusters, most of them being singletons. However, the NMI value has remained quite high, $0.710$, suggesting that this partition does have some information value even though it is not consistent. Conversely, the AMI and the ARI correctly score it at values $0.081$ and $0.013$, values that are close to 0.

	In the three scenarios (homogeneous, intermediate, heterogeneous) the method rankings using the ARI are the same as those obtained using the AMI, the only difference being that intervals are wider using the ARI. 
	In Table~\ref{tab:ari-heterogeneous} only data about the heterogeneous scenario are reported (as they are the most discriminating). There, it can be seen that the rankings through the ARI are the same as those discussed for Table~\ref{tab:vs-multibaseline}, but the difference is more marked. For example, the margin of \SNAPEx{} against CPM grows from $+0.049$ to $+0.118$ at $\mu=0.4$, from $+0.127$ to $+0.264$ at $\mu=0.55$, and from $+0.242$ to $+0.446$ at $\mu=0.65$, almost double in every case. The reason is structural from the definition of the index: the ARI counts agreements over node \emph{pairs}, so a partition that merges or splits large communities is penalised far more heavily than by a node-based index such as the AMI. Precisely because the errors of the modularity family (merging small communities) and of CPM (fragmenting large ones) are of this kind, the ARI registers them more sharply.
	
	\begin{table}[t]
		\centering
		\begin{tabular}{c r ccc}
			\toprule
			inflation & \# communities & AMI & ARI & NMI \\
			\midrule
			$1.5$ & $147$ & $1.000$ & $1.000$ & $1.000$ \\
			$1.7$ & $782$ & $0.895$ & $0.751$ & $0.928$ \\
			$1.8$ & $1984$ & $0.766$ & $0.534$ & $0.864$ \\
			$1.9$ & $3559$ & $0.615$ & $0.338$ & $0.811$ \\
			$2.0$ & $5204$ & $0.458$ & $0.194$ & $0.771$ \\
			$2.1$ & $6632$ & $0.318$ & $0.102$ & $0.744$ \\
			$2.4$ & $9030$ & $0.081$ & $0.013$ & $0.710$ \\
			\bottomrule
		\end{tabular}
		\caption{Comparison between AMI, ARI and NMI on MCL algorithm outcomes at various resolution levels ($K = 147$ is the true number of communities).}
		\label{tab:measure-degeneracy}
		
	\end{table}

	\begin{table}[H]
		\centering
		\setlength{\tabcolsep}{4pt}
		\begin{tabular}{l l c c c c c c c c c}
			\toprule
			heterogeneity & $\mu$ & $K$ & \SNAPEx{} & Leid. & Louv. & CPM & MCL & Info. & Walk. & LPA\\
			\midrule
			heterogeneous & 0.40 &  76 & \textbf{1.000} & 0.999 & 0.999 & 0.882 & 0.404 & \textbf{1.000} & 0.993 & \textbf{1.000}\\
			heterogeneous & 0.50 &  68 & 0.999 & 0.999 & 0.999 & 0.772 & 0.175 & \textbf{1.000} & 0.862 & 0.709\\
			heterogeneous & 0.55 &  70 & \textbf{1.000} & 0.999 & 0.995 & 0.735 & 0.115 & \textbf{1.000} & 0.666 & 0.333\\
			heterogeneous & 0.65 &  75 & \textbf{0.986} & 0.847 & 0.732 & 0.540 & 0.037 & 0.653 & 0.210 & 0.000\\
			heterogeneous & 0.75 &  75 & 0.073 & \textbf{0.105} & 0.092 & 0.101 & 0.003 & 0.000 & 0.043 & 0.000\\
			\bottomrule
		\end{tabular}
		\caption{Computational results about the ARIs of different methods on unweighted graphs.}
		\label{tab:ari-heterogeneous}
	\end{table}
	
\end{document}